\theoremstyle{plain}
\newtheorem{thm}{\protect\theoremname}
\theoremstyle{plain}
\newtheorem{prop}[thm]{\protect\propositionname}
\theoremstyle{plain}
\theoremstyle{plain}
\newtheorem{cor}[thm]{\protect\corollaryname}
\newtheorem*{thm*}{\protect\theoremname}
\newtheorem*{prop*}{\protect\propositionname}
\newtheorem*{lem*}{\protect\lemmaname}
\newtheorem*{cor*}{\protect\corollaryname}
\newenvironment{proof}[1][\protect\proofname]{\par
\normalfont\topsep6\p@\@plus6\p@\relax
\trivlist
\itemindent\parindent
\item[\hskip\labelsep\scshape #1]\ignorespaces
}{%
\endtrivlist\@endpefalse
}
\providecommand{\proofname}{Proof}
\definecolor{mygrey}{gray}{0.35}
\definecolor{myblue}{rgb}{0.2,0.2,0.8}
\definecolor{myzard}{cmyk}{0,0,0.05,0}
\definecolor{mywhite}{rgb}{1,1,1}
\definecolor{myred}{rgb}{0.9,0.1,0.}
\newcommand{\tr}{\operatorname{\bf{tr}}} 
\newcommand{\Real}{\operatorname{\bf{Re}}} 
\newcommand{\Imag}{\operatorname{\bf{Im}}} 
\renewcommand{\vec}[1]{\text{\boldmath$#1$}}
\newcommand{\bra}[1]{\langle #1|}
\newcommand{\ket}[1]{|#1\rangle}
\newcommand{\braket}[2]{\langle #1|#2\rangle}
\newcommand{\ketbra}[2]{|#1\rangle\!\langle#2|}
\providecommand{\corollaryname}{Corollary}
\providecommand{\lemmaname}{Lemma}
\providecommand{\propositionname}{Proposition}
\providecommand{\theoremname}{Theorem}
\begin{document}

\title{Stochastic Coherence Theory for Qubits}

\author{Thomas Theurer}

\affiliation{Institute of Theoretical Physics, Universität Ulm, Albert-Einstein-Allee
11, D-89069 Ulm, Germany}

\author{Alexander Streltsov}

\email{streltsov.physics@gmail.com}

\affiliation{Faculty of Applied Physics and Mathematics, Gda\'{n}sk University
of Technology, 80-233 Gda\'{n}sk, Poland}

\affiliation{National Quantum Information Centre in Gda\'{n}sk, 81-824 Sopot,
Poland}

\author{Martin B. Plenio}

\affiliation{Institute of Theoretical Physics, Universität Ulm, Albert-Einstein-Allee
11, D-89069 Ulm, Germany}

\begin{abstract}
	The resource theory of coherence studies the operational value of superpositions in quantum 
	technologies. A key question in this theory concerns the efficiency of manipulation and interconversion 
	of this resource. Here we solve this question completely for mixed states of qubits by 
	determining the optimal probabilities for mixed state conversions via stochastic incoherent 
	operations. This implies new lower bounds on the asymptotic state conversion rate between 
	mixed single-qubit states which in some cases is proven to be tight. Furthermore, we obtain 
	the minimal distillable coherence for given coherence cost among all single-qubit states, which 
	sheds new light on the irreversibility of coherence theory.
\end{abstract}
\maketitle

\textbf{\emph{Introduction.}} Quantum coherence is a fundamental feature
of quantum systems, arising from the superposition principle of quantum
mechanics~\cite{Schrodinger1935a}. This motivated the development of a rigorous resource theory of coherence~\cite{aberg2006quantifying,BaumgratzPhysRevLett.113.140401,WinterPhysRevLett.116.120404,YadinPhysRevX.6.041028,StreltsovRevModPhys.89.041003},
allowing for quantitative investigations of the role of coherence in fundamental
quantum technological applications, including quantum metrology~\cite{PhysRevLett.116.150502,MarvianPhysRevA.94.052324}, quantum algorithms~\cite{HilleryPhysRevA.93.012111,Matera2058-9565-1-1-01LT01} and quantum biology~\cite{doi:10.1080/00405000.2013.829687}.

A quantum resource theory is typically bases on two main ingredients, the free states and the free operations \cite{ChitambarPhysRevLett.117.030401,PhysRevLett.115.070503,StreltsovRevModPhys.89.041003}, both arising from additional restrictions on the set of quantum operations \cite{1367-2630-10-3-033023,brandao2008entanglement,doi:10.1142/S0217979213450197}.  
In the case of coherence theory, the free states are incoherent states, i.e., quantum states which are diagonal in a fixed basis. One reason to consider such free states is naturally given by the unavoidable interaction with the environment which leads to the destruction of coherence in the basis that defines classical states.
Regarding the free operations, several choices are discussed in the literature, leading to resource theories highlighting different aspects of coherence (see \cite{StreltsovRevModPhys.89.041003} for an overview). Here, we will focus on \emph{incoherent operations} (IO), which correspond to quantum measurements which cannot create coherence for individual measurement outcomes \cite{BaumgratzPhysRevLett.113.140401} and \emph{strictly incoherent operations} (SIO): these are quantum measurements which can neither create nor use coherence for all possible outcomes~\cite{WinterPhysRevLett.116.120404,YadinPhysRevX.6.041028}.

One of the central questions within a resource theory is the
\emph{state conversion problem}, i.e., the characterization of all
quantum states which can be created from a given state via free operations with certainty.  The answer to this questions leads to a partial order on the states which determines their usefulness or value, since a given state can be used in all protocols which require a state that can be created from it. 
The state-conversion problem within SIO and IO has been solved
for all pure states~\cite{WinterPhysRevLett.116.120404,PhysRevA.96.032316} and for mixed states of a single
qubit~\cite{ChitambarPhysRevLett.117.030401,PhysRevLett.119.140402,SciRepCoherenceTransQubit}.

A more general question concerns \emph{stochastic state conversion},
i.e., the optimal probability for incoherent transformation between
two given quantum states. For transformations between pure states,
this question has been addressed in \cite{Du:2015:CMO:2871378.2871381,Du:2017:ECM:3179543.3179552}. In this work,
we study stochastic state-conversion for general mixed states and
present a complete solution for this problem for all states of a single
qubit. Remarkably, there exists a discontinuity in the maximal probability $p(\rho\rightarrow\sigma)$ for transforming mixed $\rho$ into $\sigma$ using only incoherent operations: For fixed and mixed $\rho$, $p(\rho\rightarrow\sigma)$ is either strictly zero or takes some finite value.
From this, we will deduce that for generic states $\rho$, there exists a set of states which can neither be achieved nor approximated via stochastic incoherent operations, even with arbitrary little probability.

With the results concerning single copy transformations at hand, we are then able to give a lower bound on the asymptotic conversion rate~\cite{WinterPhysRevLett.116.120404} between qubits states, i.e. the maximal rate at which, in the limit of infinitely many copies, an initial state can be converted into the target state. This lower bound can be better than previously known bounds in~\cite{WinterPhysRevLett.116.120404} and for certain states, it coincides with upper bounds that also appeared in \cite{WinterPhysRevLett.116.120404}.
The proofs not given in the main text can be found in the Supplemental Material.

\bigskip{}

\textbf{\emph{Stochastic resource theory of coherence.}}
In this section, we lay down the foundations of this work. 
As mentioned in the introduction, a main ingredient of coherence theories are incoherent states
\begin{align}
	\rho=\sum_{i}p_{i}\ket{i}\!\bra{i}
\end{align}
which are diagonal in the fixed basis $\{\ket{i}\}$. As the free operations, we consider \emph{incoherent operations}~\cite{BaumgratzPhysRevLett.113.140401}: these are quantum
transformations $\Lambda$ which admit an incoherent Kraus decomposition 
\begin{equation}
\Lambda[\rho]=\sum_{i}K_{i}\rho K_{i}^{\dagger}\label{eq:Lambda}
\end{equation}
with incoherent Kraus operators $K_{i}$, i.e., $K_{i}\ket{m}\sim\ket{n}$
for incoherent states $\ket{m}$ and $\ket{n}$. Incoherent operations admit a natural interpretation as quantum measurements which cannot create coherence even if postselection is applied to the individual measurement outcomes $i$ identified with the Kraus operators $K_i$.
A general deterministic operation has the form~(\ref{eq:Lambda}), where the Kraus operators $K_{i}$
fulfil the completeness condition $\sum_{i}K_{i}^{\dagger}K_{i}=\openone$. To implement a stochastic incoherent operation, we formally postselect a deterministic incoherent operation according to the measurement outcomes $i$. Now assume we deal with a stochastic operation that can be decomposed into incoherent Kraus operators which are not necessarily complete, i.e., $\sum_{i}K_{i}^{\dagger}K_{i}\leq\openone$, and transforms a state $\rho$ into the state
$\rho\rightarrow\Lambda[\rho]/p$ with conversion probability $p=\mathrm{Tr}(\Lambda[\rho])$. If we want to call this operation incoherent, we have to ensure that it is part of a deterministic incoherent operation, otherwise we would simply disregard the nonfree part of the operation. 
That this is always possible has been shown in \cite{PhysRevLett.119.230401}. Therefore we call all stochastic operations that can be decomposed into incoherent Kraus operators incoherent as well. If we can implement a stochastic transformations from a state $\rho$ to a state $\sigma$ with probability $p$, we will write $\rho\rightarrow p\sigma$.

As we will see in the following, most of the analysis in this work
can be reduced to the mathematically simpler family of \emph{strictly incoherent
operations}. These are operations that can be decomposed into strictly incoherent Kraus operators $K_i$ which are defined by the property that both $K_{i}$ and $K_{i}^{\dagger}$ are incoherent~\cite{WinterPhysRevLett.116.120404,YadinPhysRevX.6.041028}. SIO can be interpreted as quantum measurements which can neither create nor use coherence even if postselection is applied to the measurement outcomes $i$ identified with $K_i$. As in the case of IO, a free completion is possible:
\begin{prop}\label{prop:compSIO} Every stochastic quantum operation that can be decomposed into strictly incoherent Kraus operators is part of a deterministic SIO.
\end{prop}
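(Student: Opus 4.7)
The plan is to exploit the very restrictive structure of strictly incoherent Kraus operators, which forces $\sum_i K_i^\dagger K_i$ to be diagonal in the incoherent basis, after which the completion is essentially trivial.

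First I would characterize the strictly incoherent Kraus operators concretely. If $K$ is incoherent, then $K\ket{i}\propto\ket{f(i)}$ for some function $f$ on the indices, so $K=\sum_i c_i \ket{f(i)}\!\bra{i}$. Demanding that $K^\dagger$ is also incoherent means $K^\dagger\ket{j}$ must be proportional to a single basis vector, i.e.\ the preimage $f^{-1}(j)$ contains at most one index. Hence $f$ is a partial injection, and a direct calculation gives
\begin{equation}
K^\dagger K=\sum_i |c_i|^2\,\ketbra{i}{i},
\end{equation}
which is diagonal in the incoherent basis.

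Next I would use this to analyze the incomplete sum. For any stochastic operation with strictly incoherent Kraus operators $\{K_i\}$ satisfying $\sum_i K_i^\dagger K_i\leq\openone$, the previous step shows that $D:=\sum_i K_i^\dagger K_i$ is diagonal, say $D=\sum_j d_j\ketbra{j}{j}$ with $0\leq d_j\leq 1$. To complete the instrument to a deterministic SIO, I would append the diagonal Kraus operators
\begin{equation}
L_j=\sqrt{1-d_j}\,\ketbra{j}{j},
\end{equation}
which are manifestly self-adjoint and incoherent, hence strictly incoherent. Then $\sum_i K_i^\dagger K_i+\sum_j L_j^\dagger L_j=D+(\openone-D)=\openone$, finishing the proof.

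I do not expect a real obstacle here: in contrast to the IO case treated in \cite{PhysRevLett.119.230401}, where $\sum_i K_i^\dagger K_i$ need not be diagonal and one must construct nontrivial completing Kraus operators that still map incoherent states to incoherent states, the SIO constraint is strong enough to make the residual operator diagonal, at which point the completion is just a rescaling on each basis vector. The only thing to be a little careful about is ensuring that the characterization of strictly incoherent Kraus operators is stated in its full generality (allowing $c_i=0$ and nonmaximal rank), but this does not alter the conclusion.
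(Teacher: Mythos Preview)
Your proposal is correct and follows essentially the same route as the paper: both arguments observe that strictly incoherent Kraus operators have diagonal $K^\dagger K$, so the deficit $\openone-\sum_i K_i^\dagger K_i$ is diagonal and can be completed by a diagonal (hence strictly incoherent) Kraus operator. The only cosmetic difference is that the paper packages the completion into a single diagonal operator $\tilde K=\sum_i \tilde c_i\ketbra{i}{i}$, whereas you use $d$ rank-one operators $L_j=\sqrt{1-d_j}\,\ketbra{j}{j}$; these yield the same completeness relation.
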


\textbf{\emph{Coherence theory on the Bloch sphere.}}
Since part of this work is concerned with qubits, we will make frequent use of the Bloch representation, which states that every qubit state $\rho$ can be represented by a subnormalized vector $\vec{r}=(r_x,r_y,r_z)$ through
\begin{align}
	\rho=\frac{1}{2} \left( \openone +\vec{r}\ \vec{\sigma}  \right), 
\end{align}
where $\vec{\sigma}$ represents a vector containing the Pauli matrices.  As done in the Equation above, we denote density operators by small Greek letters and their Bloch vectors by the respective small Latin letter. 
Throughout the following, we assume the eigenbasis of $\sigma_z$ to be incoherent. Then rotations about the z-axis of the Bloch sphere and their inverse are both in SIO and in IO, leading to an invariance of measures and transformation probabilities under these rotations. This makes it very convenient to introduce the quantity 
\begin{align}
r=\sqrt{r_x^2+r_y^2}.
\end{align}


\textbf{\emph{Single-qubit state conversion via stochastic incoherent
operations. }}
Here we present our results concerning the optimal single-qubit state conversion via IO. Our first step is to reduce the analysis of this problem to the simpler case of SIO. In order to do this, we use the following Proposition. 

\begin{prop}
\label{prop:MixingSIO}  For two states $\rho$,
$\sigma$ and a probability $p$ let there be a stochastic SIO achieving the transformation 
\begin{equation}
\rho\rightarrow p\sigma.
\end{equation}
Then, for every incoherent state $\tau$ and every $0\leq q\leq1-p$,
there exists a stochastic SIO achieving the transformation 
\begin{equation}
\rho\rightarrow p\sigma+q\tau.
\end{equation}
\end{prop}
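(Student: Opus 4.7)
My plan is to realise $\rho\to p\sigma+q\tau$ by enlarging the Kraus family of the assumed stochastic SIO with additional strictly incoherent Kraus operators that, on their own, take $\rho$ to $q\tau$, while keeping the overall family subnormalised. The combined family will then be a stochastic SIO producing $p\sigma+q\tau$, which Proposition~\ref{prop:compSIO} extends to a deterministic SIO.

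I would begin by unpacking the hypothesis: there are strictly incoherent Kraus operators $\{K_i\}$ with $\sum_i K_i^\dagger K_i\le\openone$ and $\sum_i K_i\rho K_i^\dagger=p\sigma$. The key structural remark is that for every strictly incoherent $K$, the operator $K^\dagger K$ is diagonal in the incoherent basis: writing $K=\sum_m c_m \ket{f(m)}\!\bra{m}$ (with $f$ a partial function, by strict incoherence of both $K$ and $K^\dagger$) gives $K^\dagger K=\sum_m|c_m|^2\ket{m}\!\bra{m}$. Hence the residue $D:=\openone-\sum_i K_i^\dagger K_i=\sum_m d_m\ket{m}\!\bra{m}$ is positive and diagonal, and taking the trace against $\rho$ yields $\mathrm{Tr}(D\rho)=1-p$.

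Assuming $p<1$ (otherwise $q=0$ and the statement is trivial), I would decompose the incoherent target as $\tau=\sum_j t_j\ket{j}\!\bra{j}$ and introduce the candidate operators $L_{m,j}:=\sqrt{q\,t_j/(1-p)}\,\sqrt{d_m}\,\ket{j}\!\bra{m}$. Each $L_{m,j}$ is manifestly strictly incoherent, being a scalar multiple of a transition $\ket{j}\!\bra{m}$ between incoherent basis elements. Two short calculations give $\sum_{m,j} L_{m,j}^\dagger L_{m,j}=\tfrac{q}{1-p}D\le D$ (using $q\le 1-p$) and $\sum_{m,j} L_{m,j}\rho L_{m,j}^\dagger = \tfrac{q}{1-p}\mathrm{Tr}(D\rho)\,\tau=q\tau$. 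Thus $\{K_i\}\cup\{L_{m,j}\}$ is a subnormalised family of strictly incoherent Kraus operators mapping $\rho$ to $p\sigma+q\tau$, and a free completion via Proposition~\ref{prop:compSIO} finishes the argument.

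The only non-routine ingredient is the diagonality of $D$: without exploiting that $K_i$ \emph{and} $K_i^\dagger$ are both incoherent, there is no reason for $\openone-\sum_i K_i^\dagger K_i$ to possess any incoherent structure, and one could not hope to paste in an arbitrary incoherent mixture $\tau$ within the remaining POVM budget. Once diagonality is established, what is left is essentially a rescaled measure-and-prepare map in the incoherent basis, so no genuine technical obstacle remains.
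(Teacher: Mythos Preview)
Your argument is correct. The diagonality of $D=\openone-\sum_i K_i^\dagger K_i$ for strictly incoherent $K_i$ is exactly the structural fact that makes the explicit construction go through, and your computations for $\sum_{m,j}L_{m,j}^\dagger L_{m,j}\le D$ and $\sum_{m,j}L_{m,j}\rho L_{m,j}^\dagger=q\tau$ are clean and valid. The final appeal to Proposition~\ref{prop:compSIO} is not even strictly needed, since the statement only asks for a stochastic SIO and your subnormalised strictly incoherent family already is one (Proposition~\ref{prop:compSIO} merely certifies that such a family deserves the name).

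The paper's proof is more modular and less explicit: it invokes Proposition~\ref{prop:compSIO} to complete the given stochastic SIO, obtaining a ``failure'' branch that outputs some state $\mu$ with probability $1-p$; it then dephases $\mu$ to an incoherent state and uses that any incoherent state can be sent to any other incoherent state $\tau$ by a SIO, applied with probability $q/(1-p)$. Closure of strictly incoherent Kraus operators under composition ensures the concatenated map is still SIO. Your operators $L_{m,j}=\sqrt{qt_j/(1-p)}\sqrt{d_m}\,\ket{j}\!\bra{m}$ are precisely this three-step procedure (complete, dephase, re-prepare) collapsed into a single Kraus family, so the two proofs are the same idea at different levels of abstraction. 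Your version has the advantage of being self-contained and making the role of strict incoherence transparent; the paper's version has the advantage of making clear that the result follows from general closure properties without any calculation.
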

This allows us to prove the promised Theorem.
\begin{thm}\label{thm:equivSIOandIO}
	Let $\rho$ and $\sigma$ be states of a single qubit. The following
	statements are equivalent:\\
	(1) There exists an IO converting $\rho$ into $\sigma$ with probability~$p$.\\
	(2) There exists a SIO converting $\rho$ into $\sigma$ with probability~$p$. 
\end{thm}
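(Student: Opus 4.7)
The implication (2)$\Rightarrow$(1) is immediate: every strictly incoherent Kraus operator is in particular an incoherent one, and by Proposition~\ref{prop:compSIO} any stochastic SIO admits a free completion to a deterministic SIO, which is in particular an IO. So the real content is the converse.

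For the nontrivial direction (1)$\Rightarrow$(2), my plan is to exploit the very restricted structure of incoherent Kraus operators on a single qubit. Any such $K$ must satisfy $K\ket{m}\propto\ket{n_m}$ for $m\in\{0,1\}$. If the images $n_0,n_1$ are distinct, then $K$ is either diagonal or antidiagonal in the incoherent basis, so $K^{\dagger}$ is also incoherent and $K$ is already strictly incoherent. If instead $n_0=n_1=j$, then $K$ has rank one and takes the form $K=\ket{j}\!\bra{\phi}$ for some vector $\ket{\phi}$, giving
\begin{equation}
K\rho K^{\dagger}=\bra{\phi}\rho\ket{\phi}\,\ketbra{j}{j},
\end{equation}
which is proportional to an incoherent state regardless of the input $\rho$.

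Given an IO with Kraus operators $\{K_i\}$ realizing $\rho\to p\sigma$, I would then split the indices into the strictly incoherent part $S$ (diagonal or antidiagonal $K_i$) and the rank-one part $R$, writing
\begin{equation}
p\sigma=\underbrace{\sum_{i\in S}K_i\rho K_i^{\dagger}}_{=:\,p'\sigma'}+\underbrace{\sum_{i\in R}K_i\rho K_i^{\dagger}}_{=:\,\mu\tau}.
\end{equation}
By construction the first sum comes from a stochastic SIO achieving $\rho\to p'\sigma'$, and by the classification above the second sum is a nonnegative multiple $\mu\tau$ of an incoherent state $\tau$. I would then apply Proposition~\ref{prop:MixingSIO} to the SIO $\rho\to p'\sigma'$ with this incoherent $\tau$ and weight $q=\mu$; the required constraint $q\leq 1-p'$ follows from $p'+\mu=p\leq 1$. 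The resulting SIO realizes $\rho\to p'\sigma'+\mu\tau=p\sigma$, which is exactly what (2) demands.

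The main obstacle I anticipate is the structural classification of qubit incoherent Kraus operators and the key observation that the rank-one ones always output states proportional to $\ketbra{j}{j}$; once this is in hand, Proposition~\ref{prop:MixingSIO} does the remaining work essentially for free. This is also what makes the argument genuinely two-dimensional: in higher dimension, non-SIO incoherent Kraus operators need not produce incoherent outputs, and this clean decomposition into an SIO part plus an incoherent remainder would no longer be available.
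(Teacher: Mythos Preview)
Your proposal is correct and matches the paper's argument essentially step for step: classify qubit incoherent Kraus operators as either strictly incoherent (diagonal/antidiagonal) or as maps whose output is always incoherent, collect the strictly incoherent ones into a stochastic SIO $\rho\to p'\sigma'$, and then invoke Proposition~\ref{prop:MixingSIO} to reattach the incoherent remainder $\mu\tau$; the paper also closes with the same remark that this dichotomy fails in higher dimensions. Your write-up is in fact more explicit than the paper's (which is quite terse), but there is no methodological difference.
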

With this result at hand, we are ready to state our main result, using the shorthand notation from above.
\begin{thm}\label{thm:Main} 
A qubit state $\sigma$ is reachable via a stochastic SIO or IO transformation from a fixed initial qubit state $\rho$ with a given probability $p$ iff 
\begin{subequations}
		\begin{align} 
			&r^2 s_{z}^{2}+\left(1-r_{z}^{2}\right)s^2\le r^2, \label{eq:firstThmMain}\\
			&p^2s^2 \le \frac{r^2}{1+|r_z|}\left(2p-(1-|r_z|)\right) \label{eq:secondThmMain}
		\end{align}
	\end{subequations}
	holds.
	\begin{figure}
		\centering \includegraphics[width=0.9\linewidth]{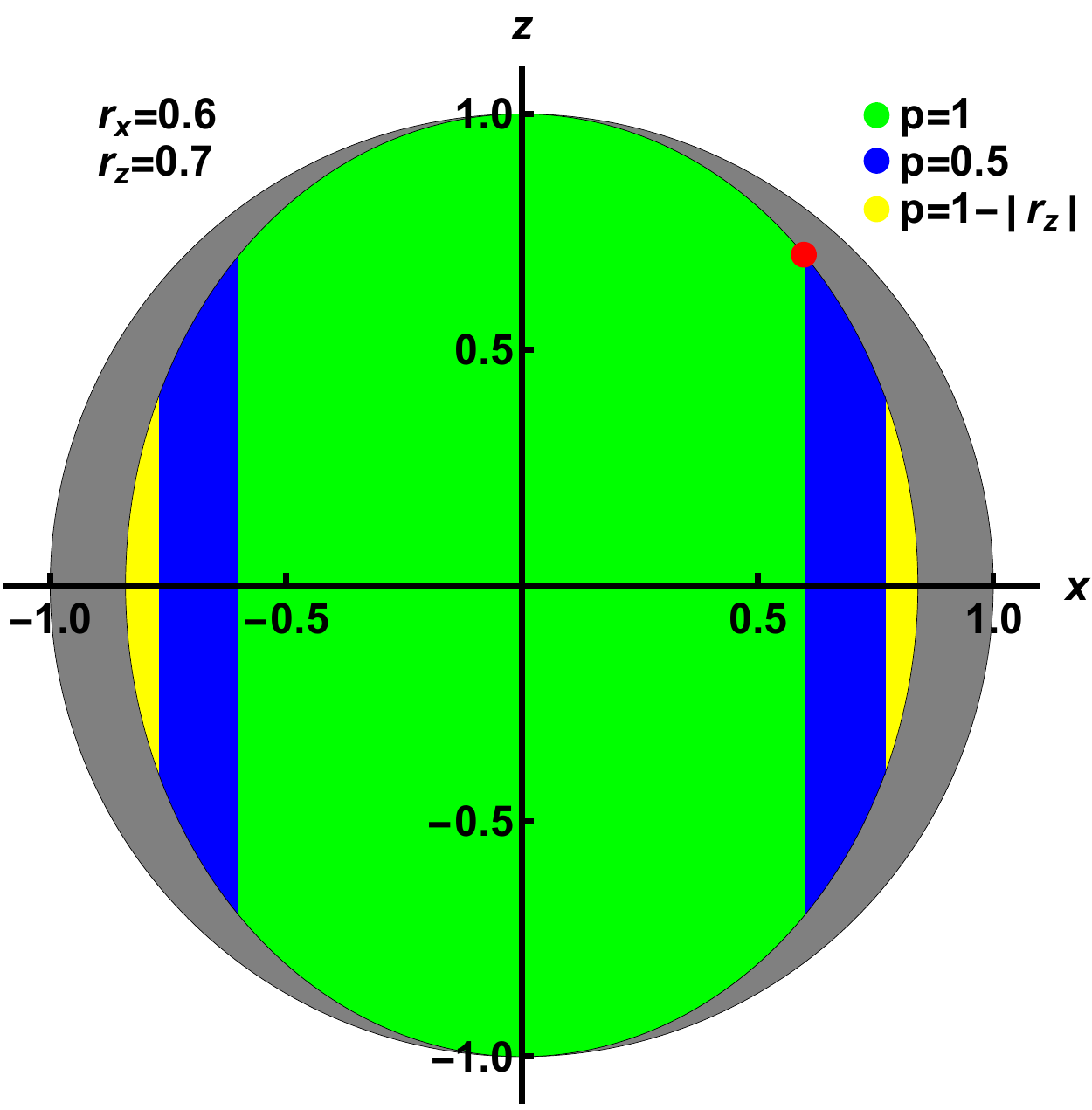} \caption{In this Figure, Thm.~\ref{thm:Main} is illustrated in the x-z plane of the Bloch sphere. For an initial state with $r_x=r=0.6$ and $r_z=0.7$, which is depicted by the red dot, the reachable regions for three different probabilities $p$ are shown. The regions which are reachable with lower probability include the ones reachable with higher probability.}
		\label{fig:ReachableRegion} 
	\end{figure}
\end{thm}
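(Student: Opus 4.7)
By Theorem~\ref{thm:equivSIOandIO} the claim need only be established for SIO. Rotations about the $z$-axis are strictly incoherent and preserve $r,s,r_z,s_z$, so I align the Bloch vectors in the $xz$-plane and take $r,s\ge 0$; the bit-flip $\sigma_x$ is also strictly incoherent and preserves both (\ref{eq:firstThmMain}) and (\ref{eq:secondThmMain}), which depend only on $|r_z|$ and $s_z^2$, so I may further assume $r_z\ge 0$. A strictly incoherent qubit Kraus operator is then either diagonal $K_i=\diag(\alpha_i,\beta_i)$ or antidiagonal $K_j'=\sigma_x\diag(d_j,c_j)$, and the two families contribute additively to the post-selected output $p\sigma$.

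With $X=\sum_i\alpha_i^2$, $X'=\sum_i\beta_i^2$, $Y=\sum_j c_j^2$, $Y'=\sum_j d_j^2$, $Z=\sum_i\alpha_i\beta_i$, $Z'=\sum_j c_j d_j$, the requirement $\sum_i K_i\rho K_i^\dagger+\sum_j K_j'\rho K_j'^\dagger=p\sigma$ amounts to the three scalar relations $X(1+r_z)+Y(1-r_z)=p(1+s_z)$, $X'(1-r_z)+Y'(1+r_z)=p(1-s_z)$, $(Z+Z')r=ps$, together with the subnormalizations $X+Y'\le 1$ and $X'+Y\le 1$. Two applications of Cauchy--Schwarz give $Z+Z'\le\sqrt{XX'}+\sqrt{YY'}\le p\sqrt{1-s_z^2}/\sqrt{1-r_z^2}$, and comparing with the off-diagonal equation yields condition~(\ref{eq:firstThmMain}). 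Condition~(\ref{eq:secondThmMain}) should then follow by feeding the subnormalization bounds into the same optimization and identifying the extremal Kraus configuration at which the tighter of $X+Y'\le 1$ or $X'+Y\le 1$ is active; I expect the stated quadratic in $p$ to emerge as the envelope of this constrained maximum.

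For sufficiency I construct an SIO explicitly. When~(\ref{eq:firstThmMain}) is saturated, a single diagonal Kraus with $\alpha^2=p(1+s_z)/(1+r_z)$ and $\beta^2=p(1-s_z)/(1-r_z)$ solves all three image equations, and its subnormalization translates directly into~(\ref{eq:secondThmMain}). When~(\ref{eq:firstThmMain}) is strict, I decompose $p\sigma$ as a sum of two outputs, each lying on the boundary of~(\ref{eq:firstThmMain}) and produced by a single diagonal or antidiagonal Kraus; the weights are fixed by the linear equations and the joint subnormalization is once again equivalent to~(\ref{eq:secondThmMain}). Proposition~\ref{prop:compSIO} completes any such subnormalized family to a deterministic SIO. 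The main obstacle is the extremal argument underlying~(\ref{eq:secondThmMain}): the quadratic bound is not a pointwise consequence of the raw constraints but emerges only after a careful optimization over all admissible Kraus decompositions, so a Lagrange-type or geometric argument will be needed to rule out any cleverer scheme.
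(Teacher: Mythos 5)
Your setup is sound and your derivation of the necessity of (\ref{eq:firstThmMain}) is correct and arguably cleaner than the paper's: the paper instead invokes the four-Kraus normal form of qubit SIO from \cite{PhysRevLett.119.140402}, passes to a trigonometric parametrization of the constraint ellipse, and optimizes $s_z$ at fixed $s_x$ by calculus and a case analysis of the boundary. Your double Cauchy--Schwarz reaches (\ref{eq:firstThmMain}) in two lines. Moreover, the part you flag as the ``main obstacle'' is not actually an obstacle in your formulation: writing $u=X+Y'\le 1$, $v=X'+Y\le 1$, the two diagonal image equations sum to $u(1+r_z)+v(1-r_z)=2p$, and Cauchy--Schwarz in the pairing $(\sqrt{X},\sqrt{Y'})\cdot(\sqrt{X'},\sqrt{Y})$ gives $ps=(Z+Z')r\le r\sqrt{uv}$. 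Maximizing $uv$ on that line segment subject to $u,v\le1$ gives $uv\le p^2/(1-r_z^2)$ when $p\le 1-|r_z|$ (recovering a consequence of (\ref{eq:firstThmMain})) and $uv\le\bigl(2p-(1-|r_z|)\bigr)/(1+|r_z|)$ otherwise, which is exactly (\ref{eq:secondThmMain}). So the necessity direction is essentially complete once you write out this one-dimensional optimization; no Lagrange machinery is needed. (You should still say a word about complex phases: replacing each Kraus entry by its modulus preserves the diagonal equations and can only increase $|Z+Z'|$, and a partial dephasing is a deterministic SIO that lowers $s$ at fixed $s_z$, so real nonnegative entries are without loss of generality.)

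The genuine gap is in your sufficiency construction. For $\sigma$ on the boundary of the ellipsoid you claim a single diagonal Kraus with $\alpha^2=p(1+s_z)/(1+r_z)$, $\beta^2=p(1-s_z)/(1-r_z)$ works and that ``its subnormalization translates directly into (\ref{eq:secondThmMain})''. It does not: the subnormalization of that single operator is $\max\{p(1+s_z)/(1+r_z),\,p(1-s_z)/(1-r_z)\}\le1$, which is strictly stronger than (\ref{eq:secondThmMain}) whenever $|s_z|>|r_z|$. Concretely, take $\rho=\ket{+}\!\bra{+}$ (so $r=1$, $r_z=0$) and $\sigma$ pure with $s_z\neq0$; then (\ref{eq:firstThmMain}) is saturated and (\ref{eq:secondThmMain}) permits $p$ up to $1$, but your single Kraus is subnormalized only for $p\le 1/(1+|s_z|)$. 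To reach the boundary at the larger admissible probabilities you must pair the diagonal Kraus with an antidiagonal one, which lets you realize the extremal values $v=1$, $u=(2p-(1-|r_z|))/(1+|r_z|)$ identified by the necessity argument while still satisfying the diagonal image equations; this is precisely the role played by $K_2$ in the paper's parametrization. Your two-output decomposition for the strict case is also only asserted, not verified. Until the extremal Kraus pair is written down explicitly and its subnormalization is checked to coincide with (\ref{eq:secondThmMain}), the ``if'' direction of the theorem is not established.
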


As shown in Fig.~\ref{fig:ReachableRegion}, this Theorem has a nice geometrical interpretation on the Bloch sphere: Eq.~(\ref{eq:firstThmMain}) defines an ellipsoid which is \emph{independent} of $p$ and Eq.~(\ref{eq:secondThmMain}) a cylinder which \emph{depends} on $p$. The reachable states lie inside their intersection. For $p \le 1-|r_z|$, the ellipsoid is entirely contained in the cylinder and Eq.~(\ref{eq:secondThmMain}) is automatically satisfied if Eq.~(\ref{eq:firstThmMain}) holds (see proof of Thm.~\ref{thm:Main}). Therefore, lowering the demanded probability of success below $1-|r_z|$ will not increase the set of reachable states. This implies that for mixed $\rho$, there is a discontinuity in $p(\rho\rightarrow\sigma)$ and the states outside the ellipsoid cannot be achieved via stochastic incoherent operations, even with arbitrary little probability. Since the Euclidean distance between qubits on the Bloch sphere equals twice their trace distance, this also implies that the states outside the ellipsoid cannot be approximated, because no state in a neighbourhood can be reached. 

In addition, Thm.~\ref{thm:Main} leads to the following Corollary.
\begin{cor}\label{cor:pmax}
	The maximal probability $p\left(\rho\rightarrow\sigma\right)$ 
	for a successful transformation from a coherent qubit state $\rho$
	to a coherent qubit state $\sigma$ using IO or SIO is zero if 
	\begin{align}
	r^2 s_{z}^{2}+\left(1-r_{z}^{2}\right)s^2>r^2
	\end{align}
	and 
	\begin{align}
	p(\rho\rightarrow\sigma)=\min\left\{ \frac{r^2}{\left(1+|r_z|\right)s^2}\left(1+\sqrt{1-\frac{s^2\left(1-r_z^2\right)}{r^2}}\right),1\right\} 
	\end{align}
	otherwise. 
\end{cor}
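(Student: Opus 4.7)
The plan is to combine Theorems~\ref{thm:equivSIOandIO} and~\ref{thm:Main} to reduce the computation of $p(\rho\to\sigma)$ to a one-dimensional quadratic inequality. By Theorem~\ref{thm:equivSIOandIO}, IO and SIO yield the same attainable probabilities, so it suffices to maximize $p\in[0,1]$ subject to the two conditions~(\ref{eq:firstThmMain}) and~(\ref{eq:secondThmMain}) of Theorem~\ref{thm:Main}. The first observation is that~(\ref{eq:firstThmMain}) does not depend on $p$: if it fails for the target $\sigma$, then no probability is admissible and $p(\rho\to\sigma)=0$, which disposes of the first case of the corollary.

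For the remaining case, (\ref{eq:firstThmMain}) holds and I would treat~(\ref{eq:secondThmMain}) as the quadratic inequality
\begin{equation*}
s^{2}p^{2}-\frac{2r^{2}}{1+|r_{z}|}\,p+\frac{r^{2}(1-|r_{z}|)}{1+|r_{z}|}\le 0
\end{equation*}
in the unknown $p$, with positive leading coefficient since $\sigma$ is coherent and thus $s>0$. Before applying the quadratic formula, I would verify that the discriminant is nonnegative: a short calculation reduces it to a positive multiple of $r^{2}-s^{2}(1-r_{z}^{2})$, and this quantity is $\ge 0$ because~(\ref{eq:firstThmMain}) gives $s^{2}(1-r_{z}^{2})\le r^{2}-r^{2}s_{z}^{2}\le r^{2}$. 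Algebraic simplification of the two real roots then yields
\begin{equation*}
p_{\pm}=\frac{r^{2}}{(1+|r_{z}|)s^{2}}\left(1\pm\sqrt{1-\frac{s^{2}(1-r_{z}^{2})}{r^{2}}}\right),
\end{equation*}
and the admissible probabilities form the interval $[p_{-},p_{+}]$.

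To conclude $p(\rho\to\sigma)=\min\{p_{+},1\}$ as stated, I would verify that $[p_{-},p_{+}]\cap[0,1]$ is nonempty, so that the maximum of the admissible probabilities capped at $1$ is indeed $\min\{p_{+},1\}$. Substituting $p=1-|r_{z}|$ into~(\ref{eq:secondThmMain}) collapses it to exactly the inequality $s^{2}(1-r_{z}^{2})\le r^{2}$ already established, so $1-|r_{z}|\in[p_{-},p_{+}]\cap[0,1]$ and the claim follows. There is no deep conceptual obstacle here; the only delicate step is the algebraic bookkeeping needed to simplify the discriminant into the compact radical $\sqrt{1-s^{2}(1-r_{z}^{2})/r^{2}}$ appearing in the statement, which also clarifies why the $\min$ with $1$ is necessary---namely to clip the upper root $p_{+}$ whenever the quadratic constraint alone would formally permit $p>1$.
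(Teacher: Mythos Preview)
Your argument is correct and follows essentially the same route as the paper's own proof: both derive the zero case directly from the $p$-independence of~(\ref{eq:firstThmMain}) and then read off the maximal admissible $p$ from the quadratic~(\ref{eq:secondThmMain}), capping with $1$. Your version is simply more explicit about the discriminant check and the nonemptiness of $[p_{-},p_{+}]\cap[0,1]$ via the test point $p=1-|r_{z}|$, points the paper leaves implicit; note also that invoking Theorem~\ref{thm:equivSIOandIO} separately is harmless but redundant, since Theorem~\ref{thm:Main} already covers both IO and SIO.
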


For states that are of higher dimension than two, we can give upper bounds on the maximal conversion probability via IO. Denoting by $C$ any coherence measure with the properties defined in~\cite{BaumgratzPhysRevLett.113.140401}, it holds that
\begin{align}\label{eq:boundsTrafo}
p\left(\rho\rightarrow\sigma\right)  \le \frac{C(\rho)}{C(\sigma)}.
\end{align}
Note that these bounds also appeared in \cite{Du:2015:CMO:2871378.2871381}.

\bigskip{}
\textbf{\emph{Asymptotic state conversion via IO.}} In the scenario
considered so far we assumed that incoherent operations are applied
on one copy of the state $\rho$. In the following we will extend
our investigations to asymptotic conversion scenarios, where incoherent
operations are performed on a large number of copies of the state
$\rho$. The figure of merit in this setting is the asymptotic conversion
rate 
\begin{equation}
R(\rho\rightarrow\sigma)=\sup\left\{ r:\lim_{n\rightarrow\infty}\left(\inf_{\Lambda}\left\Vert \Lambda\left(\rho^{\otimes n}\right)-\sigma^{\otimes\left\lfloor rn\right\rfloor }\right\Vert _{1}\right)=0\right\} ,\label{eq:R}
\end{equation}
where $||M||_{1}=\mathrm{Tr}\sqrt{M^{\dagger}M}$ is the trace norm,
the infimum is performed over all incoherent operations $\Lambda$,
and $\left\lfloor x\right\rfloor $ is the largest integer smaller
or equal to the real number $x$.

It is now important to note that the single copy conversion probability
$p(\rho\rightarrow\sigma)$ is a lower bound for the conversion rate:
\begin{equation}
R(\rho\rightarrow\sigma)\geq p(\rho\rightarrow\sigma).\label{eq:bound-1}
\end{equation}
In fact, asymptotic conversion at rate $p(\rho\rightarrow\sigma)$
can be achieved by applying stochastic IO on each copy of the state
$\rho$. Denoting by $C_{\mathrm{d}}$ the distillable coherence and by $C_{\mathrm{c}}$ the coherence cost~\cite{WinterPhysRevLett.116.120404}, the bounds
\begin{equation}
	\frac{C_{\mathrm{d}}(\rho)}{C_{\mathrm{c}}(\sigma)} \leq R(\rho\rightarrow\sigma)\leq\min\left\{ \frac{C_{\mathrm{d}}(\rho)}{C_{\mathrm{d}}(\sigma)},\frac{C_{\mathrm{c}}(\rho)}{C_{\mathrm{c}}(\sigma)}\right\} .\label{eq:bound-2}
\end{equation}
appeared in \cite{WinterPhysRevLett.116.120404}.

As was shown again in~\cite{WinterPhysRevLett.116.120404}, the distillable
coherence admits the following closed expression: 
\begin{equation}
C_{\mathrm{d}}(\rho)=S(\Delta[\rho])-S(\rho),
\end{equation}
where $S(\rho)=-\mathrm{Tr}[\rho\log_{2}\rho]$ is the von Neumann
entropy and $\Delta[\rho]=\sum_{i}\ketbra{i}{i}\rho\ketbra{i}{i}$
is the dephasing operator. Moreover, the coherence cost $C_{\mathrm{c}}$
is equal to the coherence of formation $C_{\mathrm{f}}$~\cite{WinterPhysRevLett.116.120404}:
\begin{equation} \label{eq:coherenceFormation}
C_{\mathrm{c}}(\rho)=C_{\mathrm{f}}(\rho)=\min\sum_{i}p_{i}S\left(\Delta\left[\psi_{i}\right]\right).
\end{equation}
Here, the minimization is performed over all pure state decompositions
of the state $\rho=\sum_{i}p_{i}\psi_{i}$. 

Up until here, the results concerning asymptotic conversions were valid for general dimensions. From here on, we will specialize them exclusively to qubits.
For single-qubit states, Eq.~(\ref{eq:coherenceFormation}) can be further simplified as follows~\cite{YuanPhysRevA.92.022124}:
\begin{equation}
C_{\mathrm{c}}(\rho)=C_{\mathrm{f}}(\rho)=h\left(\frac{1+\sqrt{1-4|\rho_{01}|^{2}}}{2}\right),\label{eq:CcQubit}
\end{equation}
where $h(x)=-x\log_{2}x-(1-x)\log_{2}(1-x)$ is the binary entropy
and $\rho_{01}=\braket{0}{\rho|1}$.

We will now demonstrate the power of these results on a specific example.
For this, we consider the following single-qubit state: 
\begin{align}
\rho & =\left(\begin{array}{cc}
\frac{2}{3} & \frac{1}{4}\\
\frac{1}{4} & \frac{1}{3}
\end{array}\right).\label{eq:ExampleRho}
\end{align}
We will study the conversion of $\rho$ into a convex combination
of maximally coherent states $\ket{\pm}=(\ket{0}\pm\ket{1})/\sqrt{2}$,
i.e., the final state $\sigma$ has the form 
\begin{equation}
\sigma=q\ket{+}\!\bra{+}+(1-q)\ket{-}\!\bra{-}.\label{eq:ExampleSigma}
\end{equation}
In Fig.~\ref{fig:bounds} we compare the aforementioned upper and
lower bounds on the state conversion rate for the states $\rho$ and
$\sigma$ in Eqs.~(\ref{eq:ExampleRho}) and~(\ref{eq:ExampleSigma}).
In particular, there exists a range of the parameter $q$ where 
the conversion probability $p(\rho\rightarrow\sigma)$ {[}solid line
in Fig.~\ref{fig:bounds}{]} is very close to the upper bound $\min\left\{ C_{\mathrm{d}}(\rho)/C_{\mathrm{d}}(\sigma),C_{\mathrm{c}}(\rho)/C_{\mathrm{c}}(\sigma)\right\} $
{[}dashed line in Fig.~\ref{fig:bounds}{]}. The true asymptotic
conversion rate $R(\rho\rightarrow\sigma)$ is between these two lines. 
The quality of our bound should also be compared to the lower bound
$C_{\mathrm{d}}(\rho)/C_{\mathrm{c}}(\sigma)$ {[}dotted line in Fig.~\ref{fig:bounds}{]}. 
The Figure clearly shows that the two different lower bounds have their advantages for
different values of the parameter $q$: For $q$ close to $1/4$, our new bound is much tighter
than the best previously known bound~\cite{WinterPhysRevLett.116.120404}. If $q$ is below a critical value, the new bound is zero. This corresponds to the region outside the reachable ellipsoid. In addition, the new bound can never exceed one, and thus the results from~\cite{WinterPhysRevLett.116.120404} give a better bound when $\sigma$
has much lower coherence than $\rho$, which corresponds to $q\approx1/2$.
\begin{figure}
\includegraphics[height=5cm]{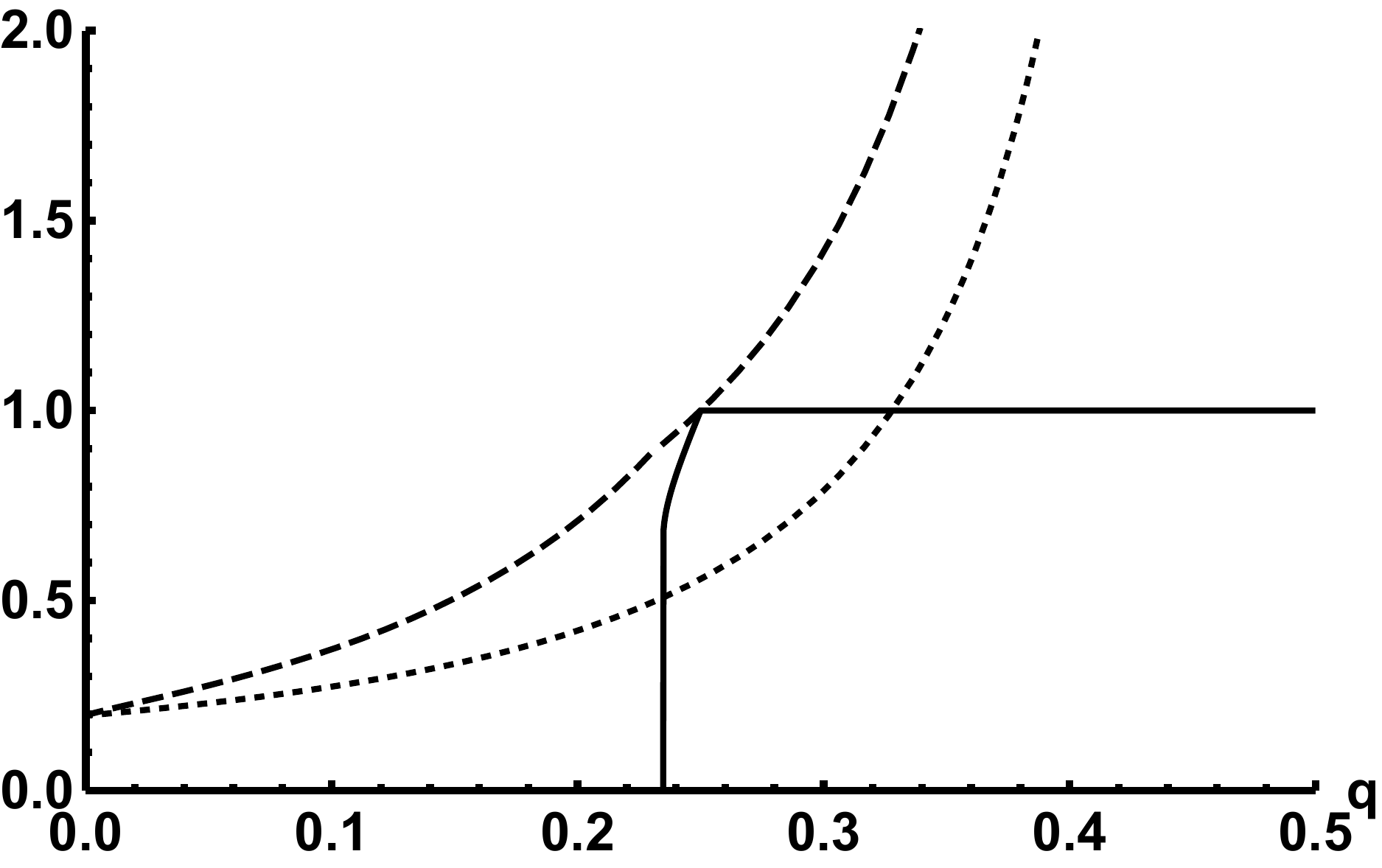}

\caption{\label{fig:bounds}Comparison of upper and lower bounds on the asymptotic
conversion rate $R(\rho\rightarrow\sigma)$ for states in Eqs.~(\ref{eq:ExampleRho})
and (\ref{eq:ExampleSigma}). Dashed line shows the upper bound given
by $\min\left\{ C_{\mathrm{d}}(\rho)/C_{\mathrm{d}}(\sigma),C_{\mathrm{c}}(\rho)/C_{\mathrm{c}}(\sigma)\right\} $,
solid line shows the lower bound given by $p(\rho\rightarrow\sigma)$,
and dotted line shows the lower bound given by $C_{\mathrm{d}}(\rho)/C_{\mathrm{c}}(\sigma)$.}

\end{figure}

Indeed, we note that for $q=1/4$ the conversion probability $p(\rho\rightarrow\sigma)$
coincides with the upper bound $C_{\mathrm{c}}(\rho)/C_{\mathrm{c}}(\sigma)$,
and in fact both are equal to $1$. This implies that the asymptotic
conversion rate is given by $R(\rho\rightarrow\sigma)=1$ in this
case. We will generalize this observation in the following Theorem,
leading to a family of single-qubit states which can be interconverted
with unit rate.
\begin{thm}\label{thm:assympUnitRate}
A state $\rho$ can be asymptotically converted into another state
$\sigma$ with optimal conversion rate $R(\rho\rightarrow\sigma)=1$
if 
\begin{equation}
s_{z}^{2}\leq r_{z}^{2}\,\,\,\mathrm{and}\,\,\,s=r.\label{eq:asymptotic}
\end{equation}
\end{thm}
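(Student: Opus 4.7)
The plan is to obtain $R(\rho\to\sigma)=1$ by sandwiching the rate between a lower bound of $1$ coming from the single-copy conversion probability and an upper bound of $1$ coming from the coherence cost.

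For the lower bound, I would use Eq.~(\ref{eq:bound-1}), $R(\rho\to\sigma)\geq p(\rho\to\sigma)$, together with Corollary~\ref{cor:pmax}. Substituting $s=r$ into the reachability condition $r^{2}s_{z}^{2}+(1-r_{z}^{2})s^{2}\leq r^{2}$ reduces it to $s_{z}^{2}\leq r_{z}^{2}$, which holds by hypothesis, so $\sigma$ is reachable with some positive probability. Then the explicit expression in Corollary~\ref{cor:pmax} becomes
\begin{equation*}
\frac{r^{2}}{(1+|r_{z}|)s^{2}}\!\left(1+\sqrt{1-\frac{s^{2}(1-r_{z}^{2})}{r^{2}}}\right)=\frac{1+\sqrt{r_{z}^{2}}}{1+|r_{z}|}=1,
\end{equation*}
where I used $s=r$ twice. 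The $\min$ with $1$ therefore gives $p(\rho\to\sigma)=1$, and hence $R(\rho\to\sigma)\geq 1$.

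For the upper bound, I would use the inequality $R(\rho\to\sigma)\leq C_{\mathrm{c}}(\rho)/C_{\mathrm{c}}(\sigma)$ from Eq.~(\ref{eq:bound-2}) together with the qubit formula~(\ref{eq:CcQubit}). Since the Bloch representation gives $|\rho_{01}|=r/2$ and $|\sigma_{01}|=s/2$, the assumption $s=r$ implies $|\rho_{01}|=|\sigma_{01}|$, and hence $C_{\mathrm{c}}(\rho)=C_{\mathrm{c}}(\sigma)$ because the binary entropy in~(\ref{eq:CcQubit}) depends only on $|\rho_{01}|$. Thus $R(\rho\to\sigma)\leq 1$, and combining the two bounds yields $R(\rho\to\sigma)=1$.

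There is no serious obstacle here; the content of the theorem is essentially a direct reading of Corollary~\ref{cor:pmax} together with the known closed forms for $C_{\mathrm{c}}$ on qubits. The only point worth checking carefully is that the single-copy probability saturates exactly at $1$ under the two stated conditions, which follows from the telescoping $1+\sqrt{r_{z}^{2}}=1+|r_{z}|$ in the formula above; this is the only place where both hypotheses $s=r$ and $s_{z}^{2}\leq r_{z}^{2}$ are used simultaneously, the former to cancel the ratio $r^{2}/s^{2}$ and the latter to ensure the square root is real and $\sigma$ lies in the reachable ellipsoid.
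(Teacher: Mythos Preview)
Your proposal is correct and follows essentially the same sandwich argument as the paper: the lower bound $R\geq 1$ from $p(\rho\to\sigma)=1$, and the upper bound $R\leq 1$ from $C_{\mathrm{c}}(\rho)=C_{\mathrm{c}}(\sigma)$ via Eq.~(\ref{eq:CcQubit}). The only cosmetic difference is that the paper invokes the deterministic single-copy result by citing an external reference, whereas you derive $p(\rho\to\sigma)=1$ directly from Corollary~\ref{cor:pmax}, which is arguably more self-contained; one small imprecision is that, once $s=r$, the square root $\sqrt{r_{z}^{2}}$ is automatically real, so the hypothesis $s_{z}^{2}\leq r_{z}^{2}$ is used only to place $\sigma$ inside the reachable ellipsoid, not for well-definedness of the formula.
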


\begin{figure}
\includegraphics[height=5cm]{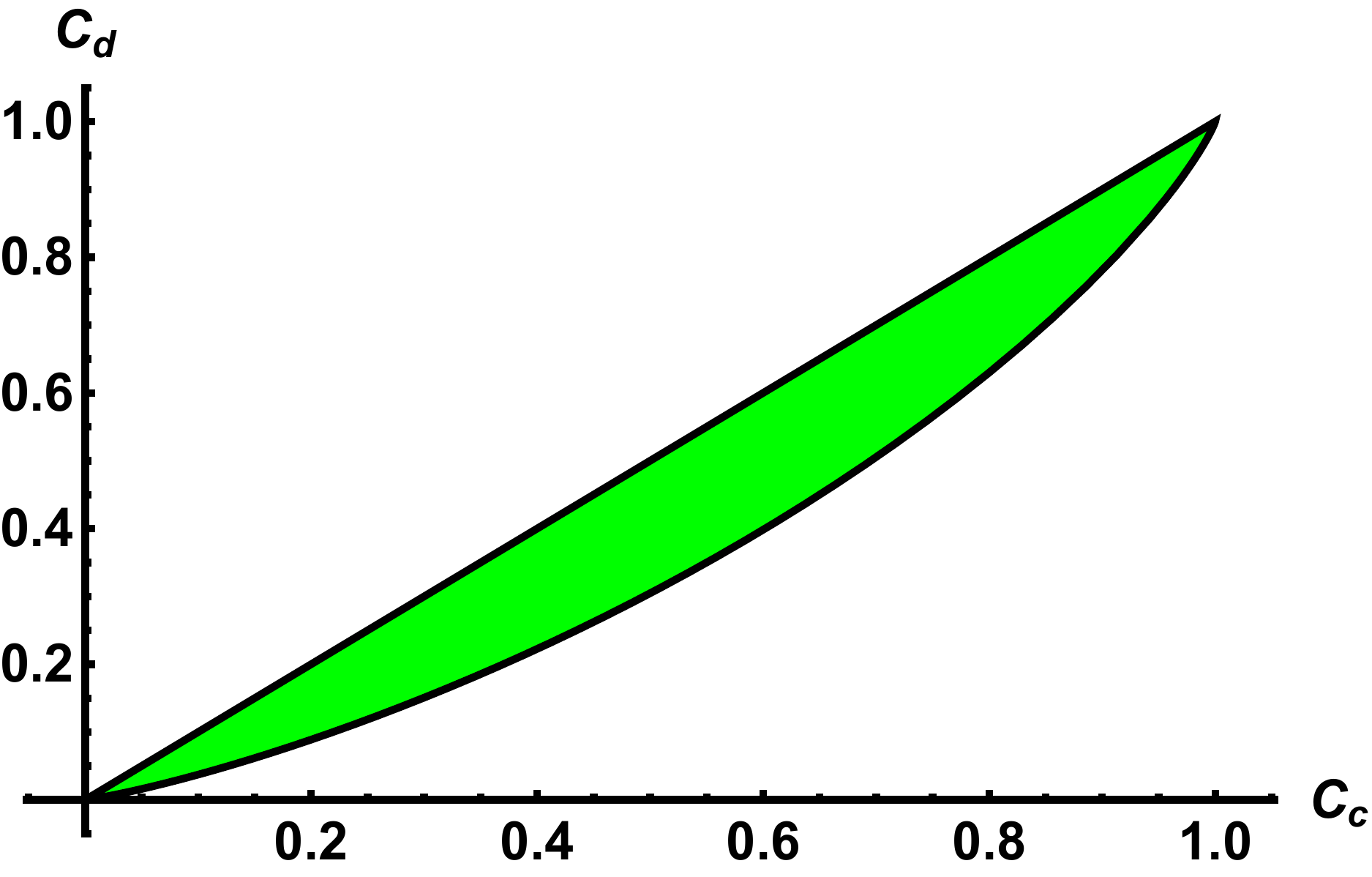}

\caption{\label{fig:irreversibility}Allowed region for distillable coherence
$C_{\mathrm{d}}$ and coherence cost $C_{\mathrm{c}}$ for single-qubit
states. The upper curve is given by $C_{\mathrm{d}}(\rho)=C_{\mathrm{c}}(\rho)$,
which is attained for pure states. The lower curve is obtained from
the family of states given in Eq.~(\ref{eq:ExampleSigma}), see main text for details.}
\end{figure}

We will now apply the methods developed in this Letter for studying
the irreversibility of coherence theory. For any quantum resource
theory, the conversion rate $R$ fulfills the following inequality
for any two nonfree states $\rho$ and $\sigma$: 
\begin{equation} \label{eq:irrevers}
R(\rho\rightarrow\sigma)\times R(\sigma\rightarrow\rho)\leq1.
\end{equation}
The resource theory is called \emph{reversible} if Eq.~(\ref{eq:irrevers})
is an equality for all nonfree states. Otherwise, the resource theory
is called \emph{irreversible}. Examples for reversible resource theories
are the theories of entanglement and coherence, when restricted to
pure states only. However, both theories are not reversible for general
mixed states~\cite{HorodeckiPhysRevLett.80.5239,WinterPhysRevLett.116.120404}.
General properties of reversible resource theories have been investigated
in~\cite{HorodeckiPhysRevLett.89.240403,PhysRevLett.115.070503}.

In the following, we will study the irreversibility of coherence theory
in more detail. In particular, we will investigate which values of
distillable coherence $C_{\mathrm{d}}$ a single-qubit state can attain,
for a fixed amount of coherence cost $C_{\mathrm{c}}$. The most interesting
family of states in this context is given by $\sigma$ in Eq.~(\ref{eq:ExampleSigma}):
this family of states has the minimal distillable coherence $C_{\mathrm{d}}$
for a fixed coherence cost $C_{\mathrm{c}}$ and vice versa maximal $C_\mathrm{c}$ for fixed $C_\mathrm{d}$~\footnote{We refer to the Supplemental Material for the proof of this statement.}.
This result allows us to plot the allowed region of coherence cost
and distillable coherence in Fig.~\ref{fig:irreversibility}. The
upper curve is given by $C_{\mathrm{d}}(\rho)=C_{\mathrm{c}}(\rho)$,
which is attained if $\rho$ is a pure state.

\bigskip{}
\textbf{\emph{Conclusions.}}
In this Letter, we studied stochastic single-qubit state conversions via incoherent operations (IO)~\cite{BaumgratzPhysRevLett.113.140401} and strictly incoherent operations (SIO)~\cite{WinterPhysRevLett.116.120404,YadinPhysRevX.6.041028}. This is an important problem, since it determines the value of qubit states for protocols using coherence. First we showed that achievable single shot conversion probabilities between qubit states are equal for SIO and IO. With the help of a recent characterization of all SIOs on qubits \cite{PhysRevLett.119.140402}, this allowed us to find simple inequalities describing all qubit states that can be reached from an initial state with fixed non-zero probability $p$. As a Corollary, we determined the maximal probability to successfully transform one qubit state into another. These results can be seen as a generalization of recent results on single-shot coherence theory~\cite{Regula1711.10512,Vijayan1804.06554}.

This single shot conversion rate gives a lower bound on
the asymptotic conversion rate, which is in some areas significantly
better than the best previously known bound~\cite{WinterPhysRevLett.116.120404}. In addition, it coincides for some states with an upper bound from \cite{WinterPhysRevLett.116.120404}, solving the asymptotic conversion problem in these cases. Finally we investigated the irreversibility of coherence theory in the asymptotic limit and determined the possible distillable coherence for fixed coherence cost.

\textbf{\emph{Acknowledgements.}} We acknowledge useful discussions with Dario Egloff and Swapan Rana.	
MBP and TT acknowledge support by the ERC Synergy Grant BioQ (grant no 319130).
AS acknowledges financial support by the National
Science Center in Poland (POLONEZ UMO-2016/21/P/ST2/04054) and the
European Union's Horizon 2020 research and innovation programme
under the Marie Sk\l{}odowska-Curie grant agreement No. 665778. 

\bibliographystyle{apsrev4-1}
\bibliography{bibliography}

\clearpage{}

\appendix

\section{Technical proofs}
Here we give the missing proofs of the main text. For readability, we restate the results.
\begin{prop*}[\ref{prop:compSIO}]\label{propA:compSIO}
	Every stochastic quantum operation that can be decomposed into strictly incoherent Kraus operators is part of a deterministic SIO.
\end{prop*}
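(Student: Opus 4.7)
The strategy is to construct a single additional strictly incoherent Kraus operator that, together with the given family $\{K_i\}$, satisfies the completeness relation $\sum_i K_i^\dagger K_i + K'^\dagger K' = \openone$.

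First I would recall the structural property of strictly incoherent Kraus operators: since both $K_i$ and $K_i^\dagger$ map incoherent states to (subnormalized) incoherent states, each $K_i$ has at most one nonzero entry in every row and in every column, i.e., $K_i=\sum_n c_n^{(i)}\,\ketbra{\pi_i(n)}{n}$ for some partial permutation $\pi_i$. The immediate consequence is that $K_i^\dagger K_i=\sum_n |c_n^{(i)}|^2\ketbra{n}{n}$ is diagonal in the incoherent basis. Summing over $i$ preserves diagonality, so
\begin{equation}
M:=\openone-\sum_i K_i^\dagger K_i
\end{equation}
is diagonal as well. The assumption that $\{K_i\}$ comes from a stochastic operation gives $\sum_i K_i^\dagger K_i\leq\openone$, hence $M\geq 0$.

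Next I would define the completing Kraus operator as the principal square root $K':=\sqrt{M}$. Because $M$ is diagonal with non-negative entries, $K'$ is itself diagonal, so both $K'$ and $K'^\dagger$ trivially send incoherent states to incoherent states, making $K'$ strictly incoherent. By construction $\sum_i K_i^\dagger K_i+K'^\dagger K'=\openone$, so $\{K_i\}\cup\{K'\}$ defines a deterministic SIO that contains the original stochastic map as the subset of outcomes indexed by $i$.

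I do not expect a substantive obstacle here; the entire content of the proposition is the observation that $\openone-\sum_i K_i^\dagger K_i$ is automatically diagonal in the incoherent basis, which makes a free completion available simply by taking its diagonal square root. The only subtlety to flag is the justification of the structural form of strictly incoherent operators and verifying that a diagonal operator is both incoherent and has an incoherent adjoint, which is immediate from the definition.
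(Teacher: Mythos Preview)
Your argument is correct and essentially identical to the paper's: both observe that $K_i^\dagger K_i$ is diagonal in the incoherent basis, so the defect $M=\openone-\sum_i K_i^\dagger K_i$ is diagonal and positive, and its diagonal square root furnishes a single strictly incoherent completing Kraus operator. The only cosmetic difference is that the paper writes out the diagonal entries explicitly as $\tilde c_i=\sqrt{1-\sum_n |c_{i,n}|^2}$ rather than invoking $\sqrt{M}$ abstractly.
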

\begin{proof} Strictly incoherent Kraus operators $K_n$ are of the form
	\begin{align}
	K_{n}=\sum_{i}c_{i,n}\ketbra{f_{n}(i)}{i}
	\end{align}
	where $f_{n}(i)$ is a bijective function on $\{1,...,d\}$. If they form a stochastic quantum operation, we have
	\begin{align}
	\sum_{n}K_{n}^{\dagger}K_{n}=\sum_{n,i}|c_{i,n}|^{2}\ketbra{i}{i}\le\openone.
	\end{align}
	Therefore $\sum_{n}|c_{i,n}|^{2}\le1\forall i$ and we can define
	\begin{align}
	\tilde{c_{i}}=\sqrt{1-\sum_{n}|c_{i,n}|^{2}}
	\end{align}
	and 
	\begin{align}
	\tilde{K}=\sum_{i}\tilde{c_{i}}\ketbra{i}{i},
	\end{align}
	which is a strictly incoherent Kraus operator and has the property 
	\begin{align}
	\tilde{K}^{\dagger}\tilde{K}+\sum_{n}K_{n}^{\dagger}K_{n}=\openone.
	\end{align}
\end{proof}

\begin{prop*}[\ref{prop:MixingSIO}]
	For two states $\rho$,
	$\sigma$ and a probability $p$ let there be a stochastic SIO achieving the transformation 
	\begin{equation}
		\rho\rightarrow p\sigma.
	\end{equation}
	Then, for every incoherent state $\tau$ and every $0\leq q\leq1-p$,
	there exists a stochastic SIO achieving the transformation 
	\begin{equation}
		\rho\rightarrow p\sigma+q\tau.
	\end{equation}
\end{prop*}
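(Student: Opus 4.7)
The plan is to augment the given SIO with extra strictly incoherent Kraus operators that map $\rho$ into $q\tau$ while fitting exactly within the ``completeness slack'' left over by the original Kraus operators $\{K_n\}$. By Proposition~\ref{prop:compSIO}, every strictly incoherent Kraus operator has the form $K_n=\sum_i c_{i,n}\ketbra{f_n(i)}{i}$, so $M\equiv\sum_n K_n^\dagger K_n=\sum_j m_j\ketbra{j}{j}$ is diagonal with $m_j=\sum_n|c_{j,n}|^2\in[0,1]$. Taking the trace of $\sum_n K_n\rho K_n^\dagger=p\sigma$ gives $\sum_j m_j\rho_{jj}=p$, hence $\sum_j(1-m_j)\rho_{jj}=1-p$, which identifies the budget we can still spend.

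Assuming $p<1$ (the case $p=1$ forces $q=0$ and is vacuous), I would then introduce the ansatz
\begin{equation}
L_{ij}=\sqrt{\alpha_{ij}}\,\ketbra{i}{j},\qquad \alpha_{ij}=\frac{q\,\tau_i\,(1-m_j)}{1-p},
\end{equation}
where $\tau_i=\braket{i}{\tau|i}$ are the (nonnegative, normalized) diagonal entries of the incoherent state $\tau$. Each $L_{ij}$ is rank-one and maps basis vectors to basis vectors, hence is strictly incoherent. A direct computation gives
\begin{equation}
\sum_{i,j}L_{ij}\rho L_{ij}^\dagger=\sum_i\tau_i\,\frac{q}{1-p}\Bigl(\sum_j(1-m_j)\rho_{jj}\Bigr)\ketbra{i}{i}=q\tau,
\end{equation}
while
\begin{equation}
\sum_n K_n^\dagger K_n+\sum_{i,j}L_{ij}^\dagger L_{ij}=\sum_j\!\left[m_j+\frac{q}{1-p}(1-m_j)\right]\!\ketbra{j}{j}\le\openone,
\end{equation}
where the last inequality uses $q/(1-p)\le 1$. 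Invoking Proposition~\ref{prop:compSIO}, the combined set $\{K_n\}\cup\{L_{ij}\}$ can be freely completed to a deterministic SIO, and postselecting onto the union of the $K$- and $L$-outcomes realizes $\rho\to p\sigma+q\tau$.

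The only real subtlety is picking the right ansatz for $\alpha_{ij}$: its $i$-dependence must be proportional to $\tau_i$ in order to produce the correct output, and its $j$-dependence must be proportional to $1-m_j$ in order to respect the residual completeness bound pointwise. Once both proportionalities are imposed, the overall scalar is forced to be $q/(1-p)$ and, crucially, automatically lies in $[0,1]$ precisely because of the hypothesis $q\le 1-p$. I expect this normalization check to be the only step that requires thought; the rest is bookkeeping.
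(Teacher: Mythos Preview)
Your proof is correct but takes a different, more explicit route than the paper. The paper argues conceptually: complete the given stochastic SIO to a deterministic one (Proposition~\ref{prop:compSIO}), totally dephase the failure branch (occurring with probability $1-p$) to obtain some incoherent state $\mu'$, then use that any incoherent state can be converted into any other incoherent state $\tau$ by SIO, applying this last step only with probability $q/(1-p)$; closure of strictly incoherent Kraus operators under composition ensures the whole pipeline remains SIO. You instead build the extra Kraus operators $L_{ij}$ by hand so that they produce $q\tau$ from $\rho$ while fitting into the residual completeness slack $\openone-M$. Your approach is fully self-contained and makes the role of the hypothesis $q\le 1-p$ visible at the algebraic level (it is exactly what forces each diagonal entry $m_j+\tfrac{q}{1-p}(1-m_j)$ below $1$); the paper's approach is shorter and more modular, highlighting that incoherent states are freely interconvertible and that SIO is closed under postprocessing. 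One minor remark: the structural form $K_n=\sum_i c_{i,n}\ketbra{f_n(i)}{i}$ is given in the \emph{proof} of Proposition~\ref{prop:compSIO} rather than its statement, so you may want to adjust the attribution.
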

\begin{proof}
	The key idea in this proof is that the set of strictly incoherent Kraus operators is closed under concatenation. Therefore, the overall map that describes the application of a SIO on post-selected output states of another SIO is still in SIO. From Prop.~\ref{prop:compSIO} follows that we can always complete a stochastic SIO for free. The part completing the map has, with probability $1-p$, a state $\mu$ as an output. Applying total dephasing to $\mu$, we obtain an incoherent state $\mu'$, which we can transform into $\tau$ using SIO. In addition, we can do this only stochastically, which proves the Proposition. 
\end{proof} 

\begin{thm*}[\ref{thm:equivSIOandIO}]
	Let $\rho$ and $\sigma$ be states of a single qubit. The following
	statements are equivalent:\\
	(1) There exists an IO converting $\rho$ into $\sigma$ with probability~$p$.\\
	(2) There exists a SIO converting $\rho$ into $\sigma$ with probability~$p$. 
\end{thm*}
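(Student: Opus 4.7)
The direction $(2)\Rightarrow(1)$ is immediate: every strictly incoherent Kraus operator is in particular incoherent, so every SIO is an IO achieving the same transformation with the same probability. The nontrivial content is the converse.

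My plan for $(1)\Rightarrow(2)$ rests on a structural observation that is special to qubits. An incoherent Kraus operator $K$ must satisfy $K\ket{m}\propto\ket{f(m)}$ for some function $f:\{0,1\}\to\{0,1\}$. When $f$ is a bijection (identity or swap) then $K$ is diagonal or anti-diagonal in the incoherent basis, and one checks directly that $K^\dagger$ is then also incoherent, so $K$ is already strictly incoherent. The only way to be incoherent without being strictly incoherent on a qubit is $f\equiv j$ constant, i.e.\ $K=\ket{j}(\alpha\bra{0}+\beta\bra{1})$. Such a $K$ has one-dimensional range spanned by $\ket{j}$, and consequently $K\rho K^\dagger=\langle\phi|\rho|\phi\rangle\,\ket{j}\!\bra{j}$ is a nonnegative multiple of the incoherent state $\ket{j}\!\bra{j}$ for every input $\rho$.

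Given a stochastic IO $\{K_i\}$ with $\sum_i K_i^\dagger K_i\le\openone$ and $\sum_i K_i\rho K_i^\dagger=p\sigma$, I would split the index set into the strictly incoherent operators $\{S_j\}$ and the remaining non-SIO ones $\{N_k\}$. The SIO operators alone satisfy $\sum_j S_j^\dagger S_j\le\openone$, so they define a stochastic SIO realizing $\rho\to p_1\sigma_1$, where $p_1\sigma_1:=\sum_j S_j\rho S_j^\dagger$ and $0\le p_1\le p$. By the observation above,
\begin{equation}
\sum_k N_k\rho N_k^\dagger=\sum_k q_k\tau_k
\end{equation}
with each $\tau_k\in\{\ket{0}\!\bra{0},\ket{1}\!\bra{1}\}$ and $\sum_k q_k=p-p_1\le 1-p_1$. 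Setting $Q:=p-p_1$ and letting $\tau:=Q^{-1}\sum_k q_k\tau_k$ (a convex combination of incoherent states, hence incoherent) if $Q>0$ (and any incoherent $\tau$ otherwise), Proposition \ref{prop:MixingSIO} applied to the SIO transformation $\rho\to p_1\sigma_1$ with the incoherent state $\tau$ and weight $Q\le 1-p_1$ yields a stochastic SIO achieving
\begin{equation}
\rho\to p_1\sigma_1+Q\tau=p_1\sigma_1+\sum_k q_k\tau_k=p\sigma,
\end{equation}
which is exactly what is required. The degenerate case $p_1=0$ is handled by interpreting the zero operator as a valid (trivial) strictly incoherent Kraus operator, to which Proposition \ref{prop:MixingSIO} still applies.

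I expect the main obstacle to be not in the algebra but in isolating the correct structural fact. In dimensions $d\ge 3$, an incoherent Kraus operator whose associated $f$ is merely non-injective but not constant can still produce genuinely coherent output (for example, $K=c_0\ket{0}\!\bra{0}+c_1\ket{0}\!\bra{1}+c_2\ket{2}\!\bra{2}$ maps coherence in the $\{\ket{1},\ket{2}\}$ sector to coherence in the $\{\ket{0},\ket{2}\}$ sector), so the rank-one rigidity that drives the argument is genuinely a qubit phenomenon. Once this rigidity is identified, Proposition \ref{prop:MixingSIO} does all of the remaining work.
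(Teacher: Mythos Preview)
Your proof is correct and follows essentially the same approach as the paper: you identify that in dimension two every incoherent Kraus operator is either strictly incoherent or has incoherent output independent of the input, isolate the strictly incoherent part as a stochastic SIO, and then invoke Proposition~\ref{prop:MixingSIO} to append the incoherent remainder. The paper's proof is more terse but structurally identical, including the closing remark that the dichotomy fails for $d\ge 3$.
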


\begin{proof}
An incoherent Kraus operator $K$ is of the form
\begin{align}
	K=\sum_i c(i) \ketbra{j(i)}{i},
\end{align}
and it is strictly incoherent if $j(i)$ is one-to-one~\cite{WinterPhysRevLett.116.120404}.
Therefore all incoherent qubit Kraus operators are either also strictly incoherent or their output is, independent of the input, incoherent. Let us use the strictly incoherent ones to define a stochastic SIO. Then Prop.~\ref{prop:MixingSIO} finishes the proof. Note that this proof technique does not work in higher dimensions, since then, there exist $j(i)$ that have neither the same output for all $i$ (and have thus incoherent output), nor are they one-to-one.
\end{proof}	

\begin{thm*}[\ref{thm:Main}]\label{thmA:Main} 
	A qubit state $\sigma$ is reachable via a stochastic SIO or IO transformation from a fixed initial qubit state $\rho$ with a given probability $p$ iff 
	\begin{subequations}
		\begin{align} 
		&r^2 s_{z}^{2}+\left(1-r_{z}^{2}\right)s^2\le r^2, \label{eqA:firstThmMain}\\
		&p^2s^2 \le \frac{r^2}{1+|r_z|}\left(2p-(1-|r_z|)\right) \label{eqA:secondThmMain}
		\end{align}
	\end{subequations}
	holds.
\end{thm*}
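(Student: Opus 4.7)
The strategy is to invoke Theorem~\ref{thm:equivSIOandIO} to reduce the problem to SIO and, by conjugating with the strictly incoherent unitary $\sigma_{x}$ (which flips the sign of $r_{z}$ while exchanging the two types of qubit SIO Kraus operators described below), to assume $r_{z}\ge 0$ throughout. Every strictly incoherent qubit Kraus operator is either diagonal, $K^{\mathrm{I}}=\alpha\ketbra{0}{0}+\beta\ketbra{1}{1}$, or anti-diagonal, $K^{\mathrm{II}}=\alpha\ketbra{0}{1}+\beta\ketbra{1}{0}$. For an arbitrary stochastic SIO built from Kraus operators of both types, let $P$ and $Q$ denote the coefficients of $\ketbra{0}{0}$ and $\ketbra{1}{1}$, respectively, in $\sum K^{\dagger}K$; the sub-normalisation $\sum K^{\dagger}K\le\openone$ gives $P,Q\in[0,1]$, and a direct computation yields $p=\tfrac{1}{2}[P(1+r_{z})+Q(1-r_{z})]$. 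Writing out the off-diagonal entry of the total output and applying the triangle inequality followed by two applications of Cauchy--Schwarz (once within each type and once across the two types) then produces the key inequality $p\,s\le r\sqrt{PQ}$.

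Necessity of (\ref{eqA:firstThmMain}) follows from the identity $(1-s_{j,z}^{2})/s_{j}^{2}\ge(1-r_{z}^{2})/r^{2}$ for every single-Kraus output $\sigma_{j}$, which is equivalent to $\sigma_{j}$ lying in the ellipsoid (\ref{eqA:firstThmMain}); equality holds precisely when the phases of $\alpha_{j},\beta_{j}$ are aligned so that the off-diagonal of $K_{j}\rho K_{j}^{\dagger}$ has magnitude $|\alpha_{j}\beta_{j}|\,|\rho_{01}|$. Since the total $\sigma$ is a convex combination of single-Kraus outputs and the ellipsoid is convex, $\sigma$ also satisfies (\ref{eqA:firstThmMain}). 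For (\ref{eqA:secondThmMain}) I square the key inequality to $p^{2}s^{2}\le r^{2}PQ$ and maximise $PQ$ subject to $P,Q\in[0,1]$ and the linear expression for $p$: when $p>1-r_{z}$ the interior Lagrange optimum falls outside the unit square, so the maximum is attained at the boundary with either $P=1$ or $Q=1$, giving $\max PQ=(2p-(1-r_{z}))/(1+r_{z})$, which is exactly (\ref{eqA:secondThmMain}); when $p\le 1-r_{z}$ the interior optimum $PQ=p^{2}/(1-r_{z}^{2})$ gives only the weaker bound $s^{2}\le r^{2}/(1-r_{z}^{2})$, which is already implied by (\ref{eqA:firstThmMain}), consistent with the main-text remark that the cylinder becomes inoperative in that regime.

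For sufficiency, given $\sigma$ in the intersection of (\ref{eqA:firstThmMain}) and (\ref{eqA:secondThmMain}) I construct an SIO explicitly. The idea is to choose a single type-I Kraus operator with phases aligned to $\rho_{01}$ whose normalised output $\sigma_{1}$ lies on the ellipsoid boundary with the same $xy$-direction as $\sigma$, and then to invoke Proposition~\ref{prop:MixingSIO} to mix $p_{1}\sigma_{1}$ with a suitably chosen incoherent state $q\tau$ so that $p_{1}\sigma_{1}+q\tau=p\sigma$ with $p=p_{1}+q$. The three free parameters $s_{1,z}$, $p_{1}$, and $\tau_{z}$ are matched to $(s,s_{z},p)$ by requiring the $xy$-components to scale correctly and the $z$-components to balance; the validity conditions $|s_{1,z}|\le 1$, $|\tau_{z}|\le 1$, $p_{1}\le 1$, and $q\le 1-p_{1}$ translate exactly into (\ref{eqA:firstThmMain}) and (\ref{eqA:secondThmMain}). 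When the required $s_{z}$ lies beyond what such a type-I construction can supply, the analogous construction based on a type-II Kraus (equivalently, by the $\sigma_{x}$ symmetry, a type-I construction applied to $X\rho X$) covers the remaining states.

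The main obstacle will be the bookkeeping in the sufficiency step: one has to verify that $(s_{1,z},p_{1},q,\tau_{z})$ can be chosen consistently throughout the full region defined by (\ref{eqA:firstThmMain}) and (\ref{eqA:secondThmMain}), and that the type-I and type-II constructions together exhaust it. I expect the most delicate point to be the transition at $p=1-|r_{z}|$, where the binding constraint switches between the ellipsoid and the cylinder; the inclusion of the ellipsoid in the cylinder in the low-probability regime, established while proving necessity, should make it possible to patch the two regimes together cleanly.
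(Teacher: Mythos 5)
Your necessity argument is correct and takes a genuinely different, more elementary route than the paper. The paper parametrizes the general four-operator SIO instrument of \cite{PhysRevLett.119.140402} trigonometrically and optimizes $s_z$ at fixed $s_x$; you instead observe that every coherent single-Kraus output lies exactly \emph{on} the ellipsoid (since $(1-s_{j,z}^2)/s_j^2=(1-r_z^2)/r^2$ for each diagonal or antidiagonal Kraus operator), so convexity of the ellipsoid gives Eq.~(\ref{eqA:firstThmMain}) immediately, and the chain of triangle and Cauchy--Schwarz inequalities gives $ps\le r\sqrt{PQ}$, from which Eq.~(\ref{eqA:secondThmMain}) follows by the constrained maximization of $PQ$. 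This is cleaner than the paper's computation and correctly reproduces the fact that the cylinder is only a binding necessary condition for $p>1-|r_z|$ (one small quibble: the off-diagonal of a \emph{single} Kraus output always has magnitude $|\alpha_j\beta_j||\rho_{01}|$ regardless of phases; phase alignment only matters when the outputs are summed).

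The sufficiency direction, however, has a genuine gap, and it is not mere bookkeeping: the construction ``one type-I Kraus operator plus an incoherent admixture via Prop.~\ref{prop:MixingSIO}, or the type-II analogue'' does not exhaust the region. For $p>1-|r_z|$ (take $r_z\ge0$), a single type-I Kraus operator reaching the cylinder boundary $s=s_{\max}(p)$ necessarily has $Q=1$ and hence outputs the unique point $s_{1,z}=1-(1-r_z)/p$, the \emph{top} of the allowed $s_z$-interval at that $s$; the type-II analogue gives the bottom, $-(1-(1-r_z)/p)$. Every intermediate point on the cylinder boundary is unreachable by your scheme, because admixing an incoherent state at fixed total probability $p=p_1+q$ forces $p_1<p$ and therefore $p_1s_1\le r\sqrt{(2p_1-(1-r_z))/(1+r_z)}<ps_{\max}(p)$, strictly shrinking $s$. (Concretely, for $r_z=1/2$, $p=3/4$, the state with $s=s_{\max}=\sqrt{32/27}\,r$ and $s_z=0$ satisfies both inequalities but a single type-I Kraus with $s_{1,z}=0$ can deliver at most $p_1s_1=r/\sqrt{3}<ps_{\max}$.) The equality case of your own Cauchy--Schwarz step tells you the fix: you must allow one diagonal \emph{and} one antidiagonal Kraus operator simultaneously, with coefficient profiles $(P_1,Q_1)=\lambda(P,Q)$ and $(P_2,Q_2)=(1-\lambda)(P,Q)$, so that $ps=r\sqrt{PQ}$ is preserved while $\lambda\in[0,1]$ sweeps $s_z$ across the whole interval. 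This extra degree of freedom is precisely what the parameter $t$ (the split between $l_a$ and $l_b$) provides in the paper's proof, and without it your claimed equivalence between the validity conditions and Eqs.~(\ref{eqA:firstThmMain})--(\ref{eqA:secondThmMain}) cannot be established.
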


\begin{proof}
According to Thm.~\ref{thm:equivSIOandIO}, we can focus on SIO transformations.
In order to implement a stochastic
qubit state transformation, we need a quantum instrument
with two possible outcomes, success and failure, modelled by $\mathcal{E}_{s}^{\text{SIO}}(\rho)$
and $\mathcal{E}_{f}^{\text{SIO}}(\rho)$. In the case of SIO transformations, both $\mathcal{E}_{s}^{\text{SIO}}(\rho)$
and $\mathcal{E}_{f}^{\text{SIO}}(\rho)$ have to be decomposable
into SIO Kraus operators. Due to Prop.~\ref{prop:compSIO}, we can
focus exclusively on $\mathcal{E}_{s}^{\text{SIO}}$. 
According to \cite{PhysRevLett.119.140402}, every $\mathcal{E}_{s}^{\text{SIO}}$
can be represented by four SIO Kraus operators 
\begin{align}
K_{1} & =\begin{pmatrix}a_{1} & 0\\
0 & b_{1}
\end{pmatrix},K_{2}=\begin{pmatrix}0 & b_{2}\\
a_{2} & 0
\end{pmatrix},\nonumber \\
K_{3} & =\begin{pmatrix}a_{3} & 0\\
0 & 0
\end{pmatrix},K_{4}=\begin{pmatrix}0 & b_{3}\\
0 & 0
\end{pmatrix}.
\end{align}
Since overall phases of Kraus operators are physically irrelevant,
we assume from here on $a_{i},b_{3}\ge0$. Defining $\vec{a}=(a_{1},a_{2},a_{3})$
and $\vec{b}=(b_{1},b_{2},b_{3})$, the condition that $\mathcal{E}_{s}^{\text{SIO}}$
is trace non-increasing is equivalent to $l_{a}^2:=|\vec{a}|^{2}\le1$
and $l_{b}^2:=|\vec{b}|^{2}\le1$. Due to symmetries and as explained
in \cite{PhysRevLett.119.140402}, we can restrict our analysis to
the case $s_{y}=r_{y}=0$ and $s_{x},r_{x},s_{z},r_{z}\ge0$. More
precisely, we assume $r_{x}>0$ from here on, since otherwise we have
the trivial case of incoherent initial states. From 
\begin{align}
\mathcal{E}_{s}^{\text{SIO}}(\rho)=p\sigma
\end{align}
then follow the Equations
\begin{align}
 & ps_{x}=r_{x}\left(a_{2}\Real(b_{2})+a_{1}\Real(b_{1})\right),\nonumber \\
 & 0=a_{2}\Imag(b_{2})-a_{1}\Imag(b_{1}),\nonumber \\
 & p(1+s_{z})=\left(a_{1}^{2}+a_{3}^{2}\right)(1+r_{z})+\left(|b_{2}|^{2}+b_{3}^{2}\right)(1-r_{z}),\nonumber \\
 & p(1-s_{z})=a_{2}^{2}(1+r_{z})+|b_{1}|^{2}(1-r_{z})
\end{align}
or equivalently 
\begin{alignat}{3}
 & ps_{x} &  & = &  & \ r_{x}\left(a_{2}\Real(b_{2})+a_{1}\Real(b_{1})\right),\nonumber \\
 & 0 &  & = &  & a_{2}\Imag(b_{2})-a_{1}\Imag(b_{1}),\nonumber \\
 & 2p &  & = &  & \ l_{a}^2(1+r_{z})+l_{b}^2(1-r_{z}),\nonumber \\
 & 2ps_{z} &  & = &  & \ \left(a_{1}^{2}+a_{3}^{2}-a_{2}^{2}\right)(1+r_{z})\nonumber \\
 &  &  &  &  & +\left(|b_{2}|^{2}+b_{3}^{2}-|b_{1}|^{2}\right)(1-r_{z}).
\end{alignat}
The principal idea of our proof from here on is the following: For fixed $r_{x},r_{z},p$, we determine states $(s_{x},s_{z})$ on
the boundary of the region which is achievable with stochastic SIO,
i.e. the region for which the Equations above have a solution for
suitable $\vec{a},\vec{b}$. Since the achievable region is convex
and contains the free states (we can always mix incoherently
with a free state), this will allow us to deduce the entire reachable region.

Now assume that $(s_{x},s_{z})$ is on the boundary of the reachable region. Then one can choose $a_{3}=0$
and $b_{3}=0$, since $K_{3}$ and $K_{4}$ destroy all coherence.
Formally, this can be shown considering 
\begin{align}
\vec{a}' & =(\sqrt{a_{1}^{2}+a_{3}^{2}},a_{2},0),\nonumber \\
\vec{b}' & =(|b_{1}|,\sqrt{|b_{2}|^{2}+b_{3}^{2}},0),
\end{align}
which lead to 
\begin{align}
ps_{x}'= & \ r_{x}\left(a_{2}'\Real(b_{2}')+a_{1}'\Real(b_{1}')\right)\nonumber \\
= & \ r_{x}\left(a_{2}\sqrt{|b_{2}|^{2}+b_{3}^{2}}+\sqrt{a_{1}^{2}+a_{3}^{2}}\ |b_{1}|\right)\nonumber \\
\ge & \ ps_{x},\nonumber \\
2ps_{z}'= & \ 2ps_{z},\nonumber \\
l_{a'}^2= & \ l_{a}^2,\nonumber \\
l_{b'}^2= & \ l_{b}^2,\nonumber \\
0= & a'_{2}\Imag(b'_{2})-a'_{1}\Imag(b'_{1}).
\end{align}
Remember that we consider fixed $r_{x},r_{z}$ and $p>0$. Thus $s_{x}'\ge s_{x}$
and $s_{z}'=s_{z}$. This mixing argument with the free states excludes boundaries of
the achievable region parallel to the x-axis. Therefore $s_{x}'>s_{x}$
for $s_{z}'=s_{z}$ cannot happen if both $(s_{x},s_{z})$ and $(s_{x}',s_{z}')$
lie on the boundary and we will assume from here on $a_{3}=b_{3}=0$
and $b_{1},b_{2}\ge0$. This leads to the Equations 
\begin{alignat}{3} \label{eq:reducedEqSxSzP}
 & ps_{x} &  & = &  & \ r_{x}\left(a_{2}b_{2}+a_{1}b_{1}\right),\nonumber \\
 & 2p &  & = &  & \ l_{a}^2(1+r_{z})+l_{b}^2(1-r_{z}),\nonumber \\
 & 2ps_{z} &  & = &  & \ \left(a_{1}^{2}-a_{2}^{2}\right)(1+r_{z})+\left(b_{2}^{2}-b_{1}^{2}\right)(1-r_{z}).
\end{alignat}
Next we notice that the second line in the above Equations defines
an ellipse. Remembering that we excluded the trivial case of $r_{z}=1$
by assuming $r_{x}>0$, we can therefore use the parametrization 
\begin{align}
l_{a}= & \sqrt{\frac{2p}{1+r_{z}}}\cos\left(t\right),\nonumber \\
l_{b}= & \sqrt{\frac{2p}{1-r_{z}}}\sin\left(t\right).
\end{align}
Without loss of generality, we choose $0\le t\le\pi/2$
and the condition $l_{a},l_{b}\le1$ leads to 
\begin{align} \label{eq:constraintsT}
\cos\left(t\right)\le & \sqrt{\frac{1+r_{z}}{2p}},\nonumber \\
\sin\left(t\right)\le & \sqrt{\frac{1-r_{z}}{2p}},
\end{align}
which restricts the range of $t$ further.
Next we substitute 
\begin{align}
a_{1} & =\sqrt{\frac{2p}{1+r_{z}}}\cos\left(t\right)\cos\left(\frac{\theta-\phi}{2}\right),\nonumber \\
a_{2} & =\sqrt{\frac{2p}{1+r_{z}}}\cos\left(t\right)\sin\left(\frac{\theta-\phi}{2}\right),\nonumber \\
b_{1} & =\sqrt{\frac{2p}{1-r_{z}}}\sin\left(t\right)\sin\left(\frac{\theta+\phi}{2}\right),\nonumber \\
b_{2} & =\sqrt{\frac{2p}{1-r_{z}}}\sin\left(t\right)\cos\left(\frac{\theta+\phi}{2}\right),
\end{align}
which automatically satisfies the ellipse Equation. Since all left hand sides of these Equations are positive by assumptions, we can choose without loss of generality $0\le\theta\le\pi/2$ and $-\theta\le\phi\le\theta (\Leftrightarrow 0\le \frac{\theta-\phi}{2},\frac{\theta+\phi}{2}\le \frac{\pi}{2})$.  The remaining
two Equations are then (since $p>0$) 
\begin{align} \label{eq:sxAndszByAngles}
s_{x}= & \frac{r_{x}\sin(2t)\sin(\theta)}{\sqrt{1-r_{z}^{2}}},\nonumber \\
s_{z}= & \cos(2t)\sin(\theta)\sin(\phi)+\cos(\theta)\cos(\phi).
\end{align}
When we know for every reachable $s_{x}$ the largest possible $s_{z}$, we achieved our goal of determining the boundary of the reachable region. 
Therefore we
fix $s_{x}$ and and maximize $s_{z}$. 
For fixed $s_{x}$, we obtain
from the first Equation a relation between $t$ and $\theta$, 
\begin{align}
\sin(\theta(t))=\frac{\sqrt{1-r_{z}^{2}}s_{x}}{r_{x}\sin(2t)}.
\end{align}
Using $0\le\theta\le\pi/2$, we can rewrite the second Equation as
\begin{align*}
s_{z}(t,\phi)= & \cos(2t)\sin(\theta(t))\sin(\phi)+\sqrt{1-\sin^{2}(\theta(t))}\cos(\phi),
\end{align*}
which is maximal either on the boundary or for
\begin{align*}
0= & \frac{\partial s_{z}(t,\phi)}{\partial\phi}\\
= & \sin(\theta(t))\cos(2t)\cos(\phi)-\sqrt{1-\sin^{2}(\theta(t))}\sin(\phi).
\end{align*}
Since we have $-\pi/2\le-\theta\le\phi\le\theta\le\pi/2$, this is equivalent to
\begin{align}
\phi=\arctan\left(\frac{\sin(\theta(t))\cos(2t)}{\sqrt{(1-\sin^{2}(\theta(t)))}}\right).
\end{align}
Using that $\arctan(x)$ is monotonically increasing in $x$, we find
\begin{align}
	\phi&\ge\arctan\left(\frac{-\sin(\theta)}{\sqrt{1-\sin(\theta)^2}}\right)=-\theta, \nonumber \\ \phi&\le\arctan\left(\frac{\sin(\theta)}{\sqrt{1-\sin(\theta)^2}}\right)=\theta
\end{align}
and therefore $\phi$ inside the allowed region.
Then the $s_{z}(t)$, the $s_{z}$ optimized over $\phi$, is independent
of $t$ and given by 
\begin{align}\label{eq:szMax}
s_{z}(t)=\sqrt{1-\frac{\left(1-r_{z}^{2}\right)s_{x}^{2}}{r_{x}^{2}}}.
\end{align}
Note that the expression under the square root is, due to Eq.~(\ref{eq:sxAndszByAngles}), never negative.

Now we need to check the boundaries. 
To do this, we express $t$ in terms of $\theta$ and define $X=\sin^2(\theta)$ ( therefore $(1-r_{z}^{2})s^2_{x}/r^2_{x}\le X \le 1$, again from Eq.~(\ref{eq:sxAndszByAngles})). For the moment, we assume $\cos(2t(\theta))\ge0$. This leads to
\begin{align}
	s_z^+(\phi=\theta,\theta)=&\cos(2 t(\theta)) \sin^2 (\theta)+\cos^2(\theta) \nonumber \\
	=&\sqrt{1-\frac{(1-r_z^2)s_x^2}{r_x^2 \sin^2(\theta)}} \sin^2 (\theta)+\cos^2(\theta) \nonumber \\
	=&1-X+\sqrt{1-\frac{(1-r_z^2)s_x^2}{r_x^2 X}}X \nonumber \\
	=&s_z(X).
\end{align}
Since
\begin{align}
	0=\frac{\partial}{\partial X} \left( 1-X+\sqrt{1-y/X}\ X \right)
\end{align}
has for $y\ne0$ no solutions, $s_z(X)$ attains its extrema on the boundaries. The exact maximum on the boundary depends on $t$, but it is lower than the maximum of
\begin{align}
	s_z^+(X=(1-r_{z}^{2})s^2_{x}/r^2_{x})=&1-\frac{(1-r_z^2)s_x^2}{r_x^2}, \nonumber \\
	s_z^+(X=1)=&\sqrt{1-\frac{(1-r_z^2)s_x^2}{r_x^2}}.
\end{align}
and thus smaller than the extrema inside the allowed region. In the case of $\cos(2t(\theta))\le0$, we have
\begin{align}
	s_z^-(\phi=\theta,\theta)=&\cos(2 t(\theta)) \sin^2 (\theta)+\cos^2(\theta) \nonumber \\
	=&-\sqrt{1-\frac{(1-r_z^2)s_x^2}{r_x^2 \sin^2(\theta)}} \sin^2 (\theta)+\cos^2(\theta) \nonumber \\
	=&1-X-\sqrt{1-\frac{(1-r_z^2)s_x^2}{r_x^2 X}}X \nonumber \\
	\le&s_z^+(\phi=\theta,\theta).
\end{align}
For the boundary with $\phi=-\theta$, the above considerations are the same, with the roles of $\cos(2t(\theta))\ge0$ and $\cos(2t(\theta))\le0$ inverted. 
We thus confirmed that the maximal $s_z$ for given $s_x$ is indeed given by Eq.~(\ref{eq:szMax}) and independent of $\theta$ and $t$. 

In order to finish the proof, we need to determine the reachable range of $s_x$ which depends according to Eq.~(\ref{eq:sxAndszByAngles}) on $t$ and therefore through Eqs.~(\ref{eq:constraintsT}) on $r_z$ and $p$. By the convexity of the reachable region, it is again sufficient to find the maximal reachable $s_x$. This corresponds to finding the allowed $t$ closest to $\pi/4$ (see again Eq.~(\ref{eq:sxAndszByAngles})), for which we will consider different cases. The first case is that neither of the conditions in Eq.~(\ref{eq:constraintsT}) restricts $t$, which is equivalent to 
\begin{align}
	p\le \frac{1-r_z}{2}
\end{align}
and therefore
\begin{align}
	s_x \le \frac{r_x}{\sqrt{1-r_z^2}}.
\end{align}
If
\begin{align}
	 p\le \frac{1+r_z}{2},
\end{align} the constraints are 
\begin{align}
	0\le t\le \arcsin \left(\sqrt{\frac{1-r_z}{2p}} \right).
\end{align}
For $p<1-r_z$, the upper bound on $t$ is larger than $\pi/4$, and we find the same bounds on $s_x$ as in the first case. Using
\begin{align}
	\sin\left(2 \arcsin x \right)= 2x \sqrt{1-x^2},
\end{align} we find
\begin{align}
	s_x \le \frac{r_x}{\sqrt{1+r_z}}\frac{1}{p} \sqrt{2p-(1-r_z)}
\end{align} otherwise. In the last case, for
\begin{align}
	p\ge\frac{1+r_z}{2},
\end{align}
we have a lower and an upper bound on $t$,
\begin{align}
	\arccos\left(\sqrt{\frac{1+r_z}{2 p}}\right)\le t\le \arcsin \left(\sqrt{\frac{1-r_z}{2p}} \right).
\end{align}
From Eq.~(\ref{eq:reducedEqSxSzP}), we see that the lower bound is always smaller than the upper. In addition, 
\begin{align}
	\arccos\left(\sqrt{\frac{1+r_z}{2 p}}\right) \le \arccos\left(\frac{1}{\sqrt{2}}\right)=\frac{\pi}{4}.
\end{align}
Therefore, we end up with the same conclusions as in the second case.

Finally, using the symmetry
and mixing arguments, the reachable region is defined by the inequalities 
\begin{align}
	&s_{z}^{2}\le1-\frac{1-r_{z}^{2}}{r_{x}^{2}+r_{y}^{2}}\left(s_{x}^{2}+s_{y}^{2}\right) \nonumber \\
	&\begin{cases}
	p< 1-|r_z|: &s_x^2+s_y^2 \le \frac{r_x^2+r_y^2}{1-r_z^2}  \\
	p\ge 1-|r_z|:&s_x^2+s_y^2 \le \frac{r_x^2+r_y^2}{1+|r_z|} \frac{1}{p^2}\left(2p-(1-|r_z|)\right)  \\
	\end{cases}
\end{align}
Rearranging the terms in the above Equations and using the short hand notations leads to 
\begin{align} 
&r^2 s_{z}^{2}+\left(1-r_{z}^{2}\right)s^2 \le r^2, \label{eqA:ellipse} \\
&\begin{cases}
p< 1-|r_z|: & \left(1-r_z^2\right)s^2 \le r^2,  \\
p\ge 1-|r_z|:&p^2s^2 \le \frac{r^2}{1+|r_z|}\left(2p-(1-|r_z|)\right),  \\
\end{cases}
\end{align}
formally also including the trivial cases of $r_{x}=r_{y}=0$. Now one can easily see that the condition for $p\le 1-|r_z|$ is always satisfied if condition (\ref{eqA:ellipse}) is satisfied.
If we insert $p=1-|r_z|$ into the condition for $p\ge 1-|r_z|$, we obtain after simplifications
\begin{align}
	(1-r_z^2) s^2\le r^2,
\end{align}
which is also always satisfied if condition~(\ref{eqA:ellipse}) is satisfied. Therefore the condition 
\begin{align}
	p^2s^2 \le \frac{r^2}{1+|r_z|}\left(2p-(1-|r_z|)\right)
\end{align}
is for $p\le 1-|r_z|$ automatically satisfied, if condition~(\ref{eqA:ellipse}) holds.
This leads us to the Theorem.
\end{proof}

\begin{cor*}[\ref{cor:pmax}]\label{corA:pmax}
	The maximal probability $p\left(\rho\rightarrow\sigma\right)$ 
	for a successful transformation from a coherent qubit state $\rho$
	to a coherent qubit state $\sigma$ using IO or SIO is zero if 
	\begin{align}
	r^2 s_{z}^{2}+\left(1-r_{z}^{2}\right)s^2>r^2
	\end{align}
	and 
	\begin{align}
	p(\rho\rightarrow\sigma)=\min\left\{ \frac{r^2}{\left(1+|r_z|\right)s^2}\left(1+\sqrt{1-\frac{s^2\left(1-|r_z|\right)}{r^2}}\right),1\right\} 
	\end{align}
	otherwise. 
\end{cor*}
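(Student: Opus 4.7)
The plan is to read the corollary off directly from Theorem~\ref{thm:Main}. The theorem gives the reachability condition as the conjunction of an ellipsoid constraint~\eqref{eqA:firstThmMain} that is \emph{independent} of $p$ and a cylinder constraint~\eqref{eqA:secondThmMain} that is a quadratic inequality in $p$. So for fixed coherent $\rho$ and $\sigma$, the maximal conversion probability is obtained by a two-step analysis: check whether the ellipsoid constraint is satisfied at all, and if so, find the largest $p \in (0,1]$ for which the quadratic constraint is satisfied.

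For the first step, observe that if $r^{2}s_{z}^{2}+(1-r_{z}^{2})s^{2}>r^{2}$ then Theorem~\ref{thm:Main} excludes the existence of \emph{any} stochastic SIO/IO transformation $\rho\to p\sigma$ with $p>0$. Hence $p(\rho\to\sigma)=0$ in this regime, giving the first half of the statement.

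For the second step, assume the ellipsoid condition holds. I would rewrite~\eqref{eqA:secondThmMain} as the quadratic inequality
\begin{equation}
s^{2}(1+|r_{z}|)\,p^{2}-2r^{2}\,p+r^{2}(1-|r_{z}|)\le 0,
\end{equation}
which (using that $\sigma$ is coherent, so $s>0$) has leading coefficient strictly positive. Solving by the quadratic formula, the discriminant is $4r^{4}-4s^{2}r^{2}(1-r_{z}^{2})=4r^{2}\bigl(r^{2}-s^{2}(1-r_{z}^{2})\bigr)$, and non-negativity of this expression is exactly what the ellipsoid constraint delivers (in fact the ellipsoid gives the slightly stronger bound $(1-r_{z}^{2})s^{2}\le r^{2}-r^{2}s_{z}^{2}\le r^{2}$, so the discriminant is guaranteed real). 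The set of admissible $p$ is then the closed interval $[p_{-},p_{+}]$ between the two roots, and the largest admissible value is
\begin{equation}
p_{+}=\frac{r^{2}}{(1+|r_{z}|)s^{2}}\left(1+\sqrt{1-\frac{s^{2}(1-r_{z}^{2})}{r^{2}}}\right).
\end{equation}

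Finally, since any physical conversion probability is capped by $1$, the maximal probability is $\min\{p_{+},1\}$, matching the corollary (up to what appears to be a typographical $1-|r_{z}|$ versus $1-r_{z}^{2}$ discrepancy between the body and appendix statements, which the quadratic-formula calculation resolves in favour of $1-r_{z}^{2}$). The only subtlety to verify is that the interval $[p_{-},p_{+}]$ actually intersects $(0,1]$; this follows because $p_{+}>0$ whenever $r>0$, and the $\min$ with $1$ takes care of values of $p_{+}$ exceeding one. No further obstacle arises: the corollary is essentially a direct computational consequence of Theorem~\ref{thm:Main}, with the main (mild) care needed being the correct handling of the quadratic roots and the probability cap.
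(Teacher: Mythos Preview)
Your proposal is correct and follows essentially the same route as the paper: invoke Theorem~\ref{thm:Main}, observe that the ellipsoid condition~\eqref{eqA:firstThmMain} is the sole gatekeeper for $p>0$, and then solve the quadratic~\eqref{eqA:secondThmMain} for the largest admissible $p$, capping at $1$. Your observation about the $1-|r_z|$ versus $1-r_z^{2}$ discrepancy is well spotted---the quadratic formula indeed yields $1-r_z^{2}$ under the root, matching the main-text statement rather than the appendix version; the only small gap in your write-up is that when $p_{+}>1$ you should also note $p_{-}\le 1$ (which follows since $p=1-|r_z|$ satisfies the quadratic whenever the ellipsoid constraint holds), but the paper's own proof is equally terse on this point.
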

\begin{proof} 
	From Thm.~\ref{thm:Main} and the comments below, we get that a transformation
	from $\rho$ to $\sigma$ (with $\rho$ coherent, i.e. $r>0$
	and therefore $r_{z}^{2}<1$) is possible with probability $p>0$
	iff 
	\begin{align}
	r^2 s_{z}^{2}+\left(1-r_{z}^{2}\right)s^2\le r^2.
	\end{align}
	As soon as we are inside this ellipsoid, the maximal probability of
	success is bounded by Eq.~(\ref{eqA:secondThmMain}). Now we want to maximize $p$ such that this inequality is still satisfied. This is the case if we choose the larger $p$ for which 
	\begin{align}
	p^2s^2 = \frac{r^2}{1+|r_z|}\left(2p-(1-|r_z|)\right).
	\end{align}
	Together with the assumptions that $p_{\max}$ is
	a probability, this finishes the proof. 
\end{proof}

\begin{thm*}[\ref{thm:assympUnitRate}]
	A state $\rho$ can be asymptotically converted into another state
	$\sigma$ with optimal conversion rate $R(\rho\rightarrow\sigma)=1$
	if 
	\begin{equation}\label{eq:ApAsymptotic}
	s_{z}^{2}\leq r_{z}^{2}\,\,\,\mathrm{and}\,\,\,s=r.
	\end{equation}
\end{thm*}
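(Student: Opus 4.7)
The plan is to sandwich $R(\rho\rightarrow\sigma)$ between $1$ and $1$, using the single-copy conversion probability as the lower bound and a suitable ratio of coherence measures as the upper bound. Both directions are essentially bookkeeping once the earlier results are assembled.

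For the lower bound, I would apply Thm.~\ref{thm:Main} with $p=1$. Substituting $p=1$ into Eq.~(\ref{eq:secondThmMain}) reduces that condition to $s^{2}\leq r^{2}$, which under the hypothesis $s=r$ holds with equality. Substituting $s=r$ into Eq.~(\ref{eq:firstThmMain}) gives $r^{2}s_{z}^{2}+(1-r_{z}^{2})r^{2}\leq r^{2}$, which simplifies (dividing by $r^{2}>0$, assuming $\rho$ is coherent; the incoherent case is trivial) to $s_{z}^{2}\leq r_{z}^{2}$, precisely the remaining hypothesis in~(\ref{eq:ApAsymptotic}). Thus $\rho\rightarrow\sigma$ is achievable with certainty, i.e.\ $p(\rho\rightarrow\sigma)=1$, and by Eq.~(\ref{eq:bound-1}) this yields $R(\rho\rightarrow\sigma)\geq 1$.

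For the upper bound, I would invoke the right-hand inequality in Eq.~(\ref{eq:bound-2}), specifically $R(\rho\rightarrow\sigma)\leq C_{\mathrm{c}}(\rho)/C_{\mathrm{c}}(\sigma)$. The key observation is that for a qubit with Bloch vector $(r_{x},r_{y},r_{z})$ one has $|\rho_{01}|=\sqrt{r_{x}^{2}+r_{y}^{2}}/2=r/2$, so by Eq.~(\ref{eq:CcQubit}) the coherence cost $C_{\mathrm{c}}$ is a function of $r$ alone. Under the hypothesis $s=r$ this gives $C_{\mathrm{c}}(\rho)=C_{\mathrm{c}}(\sigma)$, and hence $R(\rho\rightarrow\sigma)\leq 1$.

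Combining both bounds yields $R(\rho\rightarrow\sigma)=1$. There is no substantive obstacle: the entire proof is an application of Thm.~\ref{thm:Main} together with the established bounds~(\ref{eq:bound-1}),~(\ref{eq:bound-2}) and the qubit formula~(\ref{eq:CcQubit}). The only item requiring a moment's care is checking that the two reachability inequalities collapse cleanly to the two hypotheses when $p=1$ and $s=r$, and noting that $|\rho_{01}|$ depends on the Bloch vector only through $r$, so that equality $s=r$ forces equality of coherence costs.
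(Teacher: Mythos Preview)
Your proof is correct and follows essentially the same two-step structure as the paper: establish $p(\rho\rightarrow\sigma)=1$ to get the lower bound $R\geq1$, then use $C_{\mathrm{c}}(\rho)=C_{\mathrm{c}}(\sigma)$ via Eq.~(\ref{eq:CcQubit}) together with Eq.~(\ref{eq:bound-2}) for the upper bound. The only cosmetic difference is that the paper obtains the deterministic conversion by citing the prior qubit-SIO characterization in~\cite{PhysRevLett.119.140402}, whereas you derive it directly from Thm.~\ref{thm:Main} with $p=1$; your route is self-contained and the verification you give of Eqs.~(\ref{eq:firstThmMain}) and~(\ref{eq:secondThmMain}) is correct.
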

\begin{proof}
	In the first step of the proof note that $p(\rho\rightarrow\sigma)=1$
	for any two states $\rho$ and $\sigma$ fulfilling Eqs.~(\ref{eq:ApAsymptotic}),
	which follows directly from Eqs.~(3a) and (3b) in~\cite{PhysRevLett.119.140402}.
	This proves that $R(\rho\rightarrow\sigma)\geq1$ in this case. 
	
	In the next step we will show that states fulfilling Eqs.~(\ref{eq:ApAsymptotic})
	have equal coherence cost: 
	\begin{equation}
	C_{\mathrm{c}}(\rho)=C_{\mathrm{c}}(\sigma).\label{eq:Cc}
	\end{equation}
	Since $C_{\mathrm{c}}(\rho)/C_{\mathrm{c}}(\sigma)$ is an upper bound
	on the conversion rate, this will then complete the proof of the Theorem.
	For proving Eq.~(\ref{eq:Cc}), note that $r^2=r_{x}^{2}+r_{y}^{2}=4|\rho_{01}|^{2}$.
	Thus, Eqs.~(\ref{eq:ApAsymptotic}) directly imply the equality $|\rho_{01}|^{2}=|\sigma_{01}|^{2}$.
	Now note that for any single-qubit state $\rho$ the coherence cost
	is a simple function of $|\rho_{01}|^{2}$, see also Eq.~(\ref{eq:CcQubit}) in the main text.
	This completes the proof of Eq.~(\ref{eq:Cc}) and also the proof
	of the Theorem. 
\end{proof}

\section{Bounds on transformation probability}
Here we give the proof for the bounds in Eq.~(\ref{eq:boundsTrafo}).
Every stochastic coherence transformation from $\rho$ to $\sigma$ can be described by an incoherent quantum instrument with two possible outcomes, success and failure. We denote by $K_n$ the incoherent Kraus operators modelling the case of success and by $L_m$ the ones describing the event of failure. With
\begin{align}
p_n=&\tr\left(K_n \rho K_n^\dagger\right), \nonumber \\
q_m=&\tr\left(L_m \rho L_m^\dagger \right), \nonumber \\
\sigma_n=&K_n \rho K_n^\dagger/p_n, \nonumber \\
\chi_m=&L_m \rho L_m^\dagger/q_m, \nonumber \\
p\left(\rho\rightarrow\sigma\right)=&\sum_n p_n, \nonumber \\
q=&\sum_m q_m, 
\end{align}
we first use property (C2b), then (C3) and finally (C1) defined in \cite{BaumgratzPhysRevLett.113.140401} to arrive at
\begin{align}
C(\rho)\ge& \sum_n p_n C(\sigma_n)+ \sum_m q_m C(\chi_m) \nonumber \\
=&p\left(\rho\rightarrow\sigma\right) \sum_n \frac{p_n}{p\left(\rho\rightarrow\sigma\right)} C(\sigma_n)+ q\sum_m \frac{q_m}{q} C(\chi_m) \nonumber \\
\ge&p\left(\rho\rightarrow\sigma\right) C\left(\sum_n \frac{p_n}{p\left(\rho\rightarrow\sigma\right)} \sigma_n\right)+ qC\left(\sum_m \frac{q_m}{q} \chi_m\right) \nonumber \\
\ge&p\left(\rho\rightarrow\sigma\right) C\left(\sigma\right).
\end{align}

\section{Minimal distillable coherence for fixed coherence cost}

Here we show that the family of states 
\begin{equation}
\mu=q\ket{+}\!\bra{+}+(1-q)\ket{-}\!\bra{-}\label{eq:sigma-2}
\end{equation}
has the minimal distillable coherence for a fixed coherence cost among all single-qubit states. 

In the first step, we recall that for any single-qubit state $\rho$
the coherence cost depends only on the absolute value of the offdiagonal
element $|\rho_{01}|=|\braket{0|\rho}{1}|$, see also Eq.~(\ref{eq:CcQubit}) in
the main text. In particular, $C_\mathrm{c}$ is a strictly monotonically increasing function of $|\rho_{01}|$. Moreover, recall that $|\rho_{01}|$ is directly related
to the Euclidian distance of the state to the incoherent axis in the
Bloch space: $r_{x}^{2}+r_{y}^{2}=4|\rho_{01}|^{2}$~\footnote{Compare also the proof of Thm.~\ref{thm:assympUnitRate}.}.
This means that all states with a fixed coherence cost have the same
distance to the incoherent axis in the Bloch space. 

In the next step, we note that for any single-qubit
state $\rho$ with Bloch vector $\boldsymbol{r}=(r_x,r_y,r_z)^T$ we can introduce the state $\tilde{\rho}$ having the Bloch coordinates 
\begin{equation}
\tilde{r}_x=\sqrt{r_x^2+r_y^2},\,\,\,\,\tilde{r}_y=0,\,\,\,\,\tilde{r}_z=r_z.
\end{equation}
The state $\tilde{\rho}$ can be obtained from $\rho$ via an incoherent unitary, and thus both states have the same coherence cost and distillable coherence. In the next step, we introduce the state $\tau$ as follows:
\begin{equation}
\tau=\frac{1}{2}\tilde{\rho}+\frac{1}{2}\sigma_{x}\tilde{\rho}\sigma_{x}.
\end{equation}
Note that $\tau$ has the same distance to the incoherent axis -- and thus the
same coherence cost -- as $\rho$ and $\tilde{\rho}$, i.e., 
\begin{equation}
C_\mathrm{c} (\tau) = C_\mathrm{c} (\tilde{\rho}) = C_\mathrm{c} (\rho).
\end{equation}
Moreover, it is straightforward to
see that $\tau$ lies on the maximally coherent plane, i.e., the plane spanned by Bloch vectors corresponding to maximally coherent states. By construction, the Bloch vector of $\tau$ also lies in the $x$-$z$ plane, which implies that $\tau$ has the desired form~(\ref{eq:sigma-2}).

In the final step, recall that the distillable
coherence is convex, and thus
\begin{equation}
C_{\mathrm{d}}(\tau)\leq\frac{1}{2}C_{\mathrm{d}}(\tilde{\rho})+\frac{1}{2}C_{\mathrm{d}}(\sigma_{x}\tilde{\rho}\sigma_{x})=C_{\mathrm{d}}(\tilde{\rho})=C_{\mathrm{d}}(\rho),
\end{equation}
where we used the facts that the Pauli matrix
$\sigma_{x}$ is an incoherent unitary, and thus preserves~$C_{\mathrm{d}}$, and that $\rho$ and $\tilde{\rho}$ have the same distillable coherence. This completes the proof.

\end{document}